%
%
%
%

\documentclass[runningheads,a4paper]{llncs}

\usepackage{amssymb}
\usepackage{graphicx}
\usepackage[boxed]{algorithm2e}
\usepackage{tikz}
\usetikzlibrary{arrows, decorations.markings}
\usetikzlibrary{fit}					
\usetikzlibrary{backgrounds}



\tikzstyle{every node}=[circle, draw, fill=black!50, inner sep=0pt, minimum width=20pt]

\tikzstyle{input}=[circle,
                                    thick,
                                    minimum size=2.2cm,
                                    draw=black!80,
                                    fill=white!20]

\tikzstyle{input2}=[circle,
                                    thick,
                                    minimum size=1.0cm,
                                    draw=black!80,
                                    fill=white!20]

\tikzstyle{matrx}=[rectangle,
                                    thick,
                                    minimum size=1.8cm,
                                    draw=black!80,
                                    fill=white!20]

\tikzstyle{matrx2}=[rectangle,
                                    thick,
                                    minimum size=1.5cm,
                                    draw=black!80,
                                    fill=gray!20]

\tikzstyle{vecArrow} = [thick, decoration={markings,mark=at position
   1 with {\arrow[semithick]{open triangle 60}}},
   double distance=1.8pt, shorten >= 5.5pt,
   preaction = {decorate},
   postaction = {draw,line width=1.4pt, white,shorten >= 4.5pt}]
\tikzstyle{innerWhite} = [semithick, white,line width=1.4pt, shorten >= 4.5pt]

\tikzstyle{background}=[rectangle,
                                                fill=gray!10,
                                                inner sep=0.2cm,
                                                rounded corners=5mm]

\begin{document}

\title{Learning how to rank from heavily perturbed statistics - digraph clustering approach}

\author{Krzysztof Choromanski}

\institute{Google Research, New York NY 10011, USA}

\maketitle

\begin{abstract}
Ranking is one of the most fundamental problems in machine learning with applications in many branches of computer 
science such as: information retrieval systems, recommendation
systems, machine translation and computational biology. Ranking objects based on possibly conflicting preferences is a 
central problem in voting research and social choice theory.
In this paper we present a new simple combinatorial ranking algorithm adapted to the preference-based setting.
We apply this new algorithm to the well-known scenario where the edges of the preference tournament are determined by the 
majority-voting model. It outperforms existing methods when it cannot be assumed that there exists global ranking of good enough quality
and applies combinatorial techniques that havent been used in the ranking context before.
Performed experiments show the superiority of the new algorithm over existing methods, also over
these that were designed to handle heavily perturbed statistics.
By combining our techniques with those presented in \cite{mohri}, we obtain a purely combinatorial algorithm that answers 
correctly most of the queries in the heterogeneous scenario,  where the preference tournament is only locally of good quality but 
is not necessarily pseudotransitive. As a byproduct of our methods, we obtain the
algorithm solving clustering problem for the directed planted partition
model. To the best of our knowledge, it is the first purely combinatorial algorithm tackling this problem.
 
\end{abstract}

\section{Introduction}

\subsection{Background}

The problem of ranking arises in many important applications of computer science such as information retrieval systems (e.g. the design of modern search engines), recommendation systems, computational biology and many more. There are two main approaches to the ranking problem. In the \textit{score-based setting} the input is a sample of pairwise preferences
from the dataset. The goal is to learn the so-called \textit{scoring function} $f:U \rightarrow R$ inducing a linear ordering on the set of all the objects $U$. Several algorithms were
proposed here. This setting was considered for example in \cite{freund} and \cite{rudin}. In \cite{Joachims} an SVM-based ranking algorithm for this scenario was presented.  Other algorithms include PRank given by \cite{crammer} and \cite{agarwal}. In this paper we focus on the \textit{preference-based setting} though. In this setting what is given is a preference function
$h: U \times U \rightarrow R$ taking values from the interval [0,1]. For a pair $(x,y) \in U \times U$ the closer $h(x,y)$ to 0 the more confident we are that $x$ is "better" than $y$ and vice
versa. Therefore the values of $h$ may be interpreted as probabilities. Notice that such a function induces a directed graph (\textit{digraph}) with weighted edges, where
the weights of edges are taken from the interval [0,1].
The goal is to find high-quality consistent rankings from such pairwise observations.
From now on we call the aforementioned graph a \textit{preference graph of $h$} or simply: \textit{a preference graph}. When this directed graph is a tournament (i.e. all the edges are defined),
as it will be the case in our setting, we call this graph \textit{a preference tournament}.
This approach to ranking was introduced in \cite{cohen} and led to several interesting results (\cite{balcan}, \cite{cheng}, \cite{mohri}).
Somewhat similar model was considered also in \cite{cossock}.
Notice that $h$ does not need to induce a linear ordering. In particular, the preference graph may not be a dag (i.e. it may contain directed cycles). In the tournament
setting this means that a preference tournament does not have to be transitive. This is motivated by real data.
The collection of pairwise preferences from which the preference graph is constructed may be aggregated from several noisy sources and, therefore, 
some preferences may give rise to inconsistencies or contradictions. 
For instance, the pairwise preferences taken in aggregate may not induce a consistent ranking over all the objects. 
Possibly conflicting preferences give rise to many directed cycles in the preference graph. 
As a result, the preference graph itself may be very far from being a dag. This implies that there may not exist a global good-quality scoring function. 

There are several results proposing ranking of objects in the setting where the preference tournament is not consistent but the notion of the global ranking of good quality 
makes sense (the so-called \textit{pseudotransitive setting}).
For definiteness let us assume right now that the preference graph under consideration is unweighted, i.e. all existing edges have weight $1$.
In this scenario the goal is usually to find an ordering of the vertices of the preference graph that induces as few backward edges as possible. 
Investigating all possible permutations of the set of vertices of the preference graph is usually (when the set of objects to rank is very large as it will be in  our scenario) 
untractable. The problem of finding
the permutation of vertices of a given digraph that minimizes the size of the set of backward edges, which in the literature is called a \textit{feedback arc set problem}, is NP-hard.
However there exist several approximation algorithms that output orderings with not too many more backward  edges (see for example: \cite{alon2}).
A significant breakthrough was done in  \cite{ailon} where a simple 3-approximation random algorithm for the feedback arc set problem working in $O(n\log(n))$ time was given,
where $n$ is the number of vertices of a given tournament.
The novel and counterinuitive idea was to use a quick-sort approach with pivot points chosen at random for the input graph that does not necessarily have a linear ordering of vertices.
All those results can be generalized to the weighted setting. In that case the reasonable objective function to work with is the sum of weights of backward edges.
This variation, as mentioned earlier, models the scenario where the set of different pairwise preferences expresses heterogeneous certainty level or 
heterogeneous importance. This setting is known as the weighted feedback arc set problem. Many formal results regarding  this problem  were proved by \cite{mathieu} and \cite{rudra}.
Such a problem was also considered in \cite{mohri}, where it was showed how to extend the quick-sort approach to the general weighted preference tournaments with weights taken from the interval [0,1].

\subsection{Our contribution - strongly heterogeneous setting}

Our results should be viewed as a further extension of the purely combinatorial approach from \cite{mohri} for the setting when 
optimizing the size/weight of the set of backward edges is not the right thing to do and thus the methods discussed before fail. 
As we have already noticed, the statistics that are given as an input to the ranking algorithm may be heavily perturbed. 
This makes learning the global ranking very difficult if not impossible in practice. 
All methods discussed so far may suffer from significant inconsistences and noise added to the input data.
If there does not exist a global ranking of good quality (i.e. if the assumption that a preference tournament is 
pseudotransitive is not legitimate) the need arises to find local good quality rankings. Thus every ranking algorithm needs first
to cluster the preference tournament into locally pseudotransitive chunks (i.e. chunks that can be made transitive after reversing only few directed edges) and then perform ranking algorithms separately on each chunk.
The clustering becomes a necessary preprocessing step.

We give in this paper the first purely combinatorial clustering algorithm in the directed setting that partitions preference tournaments into small number of pseudotransitive 
clusters. We combine it with the existing ranking methods to obtain new effective framework for ranking with heavily perturbed preference tournaments.
We also conduct extensive evaluation of this clustering+ranking paradigm by comparing our approach with several state-of-the-art techniques, also those
that focus on the setting with heavily perturbed statistics.

Our results can be applied in many different ways. One natural application regards the majority-voting model which is widely used to obtain the preference tournament.
In this setting users vote to determine which one from the pair of objects should get higher rank and the majority decides. Different pairs of objects attract different sets of users 
and the number of votes reflects the demand for the right evaluation of the given pair. The heterogeneity here may be implied by the fact that it does not make sense
to compare objects belonging to different categories/domains (such as favourite cars with favourite movies) or simply there is not enough data to precisely compare
objects from different categories. Those categories however do not always have to be obvious in advance and
may depend on the characteristic of the users. Thus any algorithm that aims to rank in this scenario needs also to learn the categories with good precision since only ranking within a given
category is meaningful. The algorithm should not assume that a domain is known even for a single data point. 
The exact number of groundtruth domains as well as their sizes (that may differ) are not necessarily known in advance.

After learning from the preference tournament, the ranking engine receives a stream of queries from the users and needs to correctly answer them.
Each query is taken from the same distribution that was used to construct the preference tournament and is of the form $\{u_{1},u_{2}\}$, where $u_{1},u_{2} \in \mathcal{U}$ 
are taken from the universe of all the objects. The answer indicates which object has higher rank. 
We show that our algorithm may be easily applied in this setting to answer correctly most of the queries while the other approaches fail.
The preference tournament model arising here is an example of the more general \textit{planted partition model} of the preference tournament.
This general model is a subject of our theoretical analysis. 
Planted partition model (that gained attention because of its applications in many fields of applied computer science) 
was extensively studied in the context of clustering undirected graphs (see section below) but not too many results regarding 
the directed setting are known.

\subsection{Related work - heterogeneous setting and noisy statistics}

The most straightforward way to analyze the heteregeneous setting described above is the planted partition model. 
The planted partition model is usually considered in terms of undirected graphs but there is an analogous directed formulation.
Several algorithms to reconstruct the groundtruth clustering that was used to obtain planted partition model were considered.  
Many of them use spectral partitioning techniques. Some of the most notable approaches are those of  \cite{mcsherry}, where perturbation 
theory techniques from \cite{vu} were applied as well as the results of \cite{chaudhuri}. Those results consider however mainly undirected setting 
where the domains induce dense graphs and there are not too many edges between different domains. Much less research was done in the
directed setting. In \cite{meila} the clustering with the idea of weighted cuts was considered. 
It has to be emphasized that all the papers touching the problem of clustering directed networks (see also: \cite{craven}, \cite{zhou}) have a very different goal than our 
clustering algorithm. In all these approaches a strongly connected component is considered to be a good cluster. It does not make sense in our setting, where
the entire preference tournament is with high probability strongly connected and clusters are in fact related to subtournaments that are very far from being strongly connected.
There were other papers discussing learning how to rank in the noisy setting such as \cite{feige}, where the noisy decision tree is the subject of analysis, or \cite{yue}, where noisy comparisons between pairs of strategies are performed. Both settings are substantially different from ours. In particular, none of them solves the clustering problem for directed graphs that is unavoidable in our scenario. Some of the most effective methods to rank, also with preference tournaments and for heavily perturbed statistics, are presented in  \cite{airola} and \cite{klautau}.
Those methods will be compared with our approach in the experimentals section (see: Appendix).

This work is organized as follows:

\begin{itemize}
 \item In Section 2 we formally define the heavily perturbed statistics setting as a directed planted partition model. 
       We describe the problem that needs to be solved by the ranking algorithm in this setting
       and the majority-voting model as its very special case.
 \item In Section 3 we present our ranking and clustering algorithms.
 \item In Section 4 we present all the theoretical results.
 \item In Section 5 we give final conlcusions and discuss future work.
 \item In the Appendix we give all the proofs, show experimental results,
         explain why our techniques may be also easily applied in the weighted setting, finally - comment more on  the algorithms and ineffectiveness
         of the previous methods.
\end{itemize}

\section{The model}

\subsection{Planted partition model for heavily perturbed statistics}

Assume that we are given a tournament $T$ with the set of vertices $V(T)=\mathcal{G}$, where:
$\mathcal{G} = \mathcal{G}_{1} \cup ... \cup \mathcal{G}_{k}$. We call each $\mathcal{G}_{i}$ a \textit{domain}.
We denote $n_{i}=|\mathcal{G}_{i}|$ for $i=1,...,k$.
Every set $\mathcal{G}_{i}$ contains a preferred ordering of vertices that from now on will be called
\textit{the canonical ordering of $\mathcal{G}_{i}$} and will be denoted as $\theta_{i}$. 
The directions of edges of $T$ are chosen independently according to the following procedure.
For $u_{1}, u_{2} \in \mathcal{G}_{i}$ a directed edge $(u_{2},u_{1})$ is chosen with probability $p_{i}$ ($p_{i} \ll 1$)
if $u_{1}$ appears earlier than $u_{2}$ in $\theta_{i}$ and with probability $1-p_{i}$ otherwise. 
For $u_{1} \in \mathcal{G}_{i}$, $u_{2} \in \mathcal{G}_{j}$ ($i \neq j$) a directed edge $(u_{1},u_{2})$ is chosen with probability $p_{i,j}$
and a directed edge $(u_{2},u_{1})$ is chosen with probability $p_{j,i} = 1 - p_{i,j}$ ($p_{i,j} \gg 0$).
The publically available parameters of the model are: 
\begin{itemize}
\item the upper bound $p_{u}$ on each $p_{i}$,
\item the lower bound $p_{m}$ on each $p_{i,j}$ and,
\item the upper bound $k_{u}$ on the number of domains $k$.
\end{itemize}

We call the ratio $\frac{p_{m}}{p_{u}}$ 
the \textit{heterogeneity level of the preference tournament $T$} and denote it shortly by $het(T)$.

Let us comment on this planted partition model for the preference tournament.
The sets $\mathcal{G}_{i}$ will be called by us: \textit{groundtruth domains}.
The canonical ordering models the fact that within each domain there exists a good quality ranking that
with very high probability induces only few backward edges (assumption: $p_{i} \ll 1$).
The fact that the statistics regarding objects from different domains are inconsistent (and generally of much weaker quality) 
is modeled by the fact that there exists a nontrivial lower bound
$p_{m}$ on each $p_{i,j}$.
Of course in the planted partition model we assume that  $p_{m} > p_{u}$, i.e. $het(T) > 1$. 
The larger the value of $het(T)$ is, the more heterogeneous the setting is
with the quality of statistics significantly differing for different pairs.

The objective of the ranking algorithm in this setting is to: preprocess data to get a good approximation of the groundtruth 
clustering and then to learn within each reconstructed cluster.
Our novel contribution regards the preprocessing phase.
Most known algorithms operated on the planted partition model need the exact knowledge of the parameters of the model.
In our algorithms we will just need some nontrivial bounds $p_{m}$,$p_{u}$.

We say that a set $X$ is \textit{$(1-\epsilon)$-pure} if all but at most a fraction $\epsilon$ 
of all the points from $X$ are from the same groundtruth domain. We say that a set $\mathcal{P}$ 
of the sets of vertices is \textit{$(1-\epsilon)$-pure} if every member of $\mathcal{P}$ is
$(1-\epsilon)$-pure. The goal is thus to find an $(1-\epsilon)$-pure partitioning $\mathcal{P}$
of most of the vertices of $T$ with not too many parts and for small enough $\epsilon$ 
(since one can always output as a $1$-pure partitioning a set of singletons).

\subsection{The majority-voting model}

This model is an important practical application for our algorithm and a motivation for the planted partition 
model of the preference tournament. It is one of the most popular ways to construct the preference tournament.
We formally define it now.

Let $\mathcal{U}=\{\mathcal{U}_{1},...,\mathcal{U}_{k}\}$ be the universe of all the objects partitioned into $k$ domains:
$\mathcal{U}_{1},...,\mathcal{U}_{k}$. 
Assume that there exists a global groundtruth ranking of all the objects.
The preference tournament is constructed simply by collecting statistics regarding
every unordered pair of different points from $\mathcal{U}$ (this is the training set) and choosing for each pair the preference that was given by the
majority of the users. 
Different pairs may be ranked by different users, in particular the sizes of the sets of statistics will vary from pair to pair.

Let $\mathcal{P}_{\mathcal{U}}$ be the probability distribution on the set of all unordered
pairs of different points from $\mathcal{U}$. It defines the probability that a specific pair $\{x,y\}$ will be evaluated
by the next user (in the training phase) or will be requested by the next user to be evaluated (in the test phase). 
The users choose pairs of points to evaluate/ask for evaluation independently.
Each training point consists of an unordered pair of objects for the evaluation and the evaluation itself.
Objects within a domain are compared much more frequently than between the domains.
We say that a training set $\mathcal{T}$ is $(M,m)$-unbalanced in respect to the partitioning $\{U_{1},...,U_{k}\}$
if every pair of different points from the same $\mathcal{U}_{i}$ was evaluated at least $M$ times in the training phase
and every pair of points from different: $\mathcal{U}_{i}$, $\mathcal{U}_{j}$ was evaluated at most $m$ times
in the training phase. The bigger $M$ and smaller $m$, the more heterogeneous setting we consider.
Given a pair of objects, a single user in the training phase gives a correct comparison (i.e. consistent with the groundtruth ordering)
with probability $p_{succ} > \frac{1}{2}$. 
The objective is to come up with the algorithm that gives correct answers to as many queries from the test set as possible. 

The threshold $p_{succ} > \frac{1}{2}$ is a standard assumption in all ranking models that are based on many 
independent votes. It guarantees that the sufficient number of votes will enable the algorithm to predict the right comparison with very 
high probability. In our model however not all the pairs will get the sufficient number of votes and this is where the planted partition model
of the preference tournament described in the previous section comes into action.
If we define by $\mathcal{M}$ the majority-voting model presented above and by $T_{\mathcal{M}}$ a related
preference tournament then the latter is constructed from the planted partition model introduced in the previous section.
The parameters $p_{u}$ and $p_{m}$ of $T_{\mathcal{M}}$ can be easily derived from the parameters $p_{succ}$, $M$ and $m$ of the majority-voting model
(details in Section \ref{app_2}). It turns out that we can use our clustering algorithm as a preprocessing step performed on that
preference tournament and then combine it with existing ranking methods to answer correctly most of the queries.
As we will see in the experimental section (see: Appendix), we outperform the state-of-the-art methods that can be applied in this scenario.

\section{The Algorithm}

In this section we present both: the clustering algorithm for the digraph planted partition model of tournaments and the ranking algorithm for heavily perturbed statistics.

\vspace{-0.31in}
\begin{algorithm}
\textbf{Algorithm 1 - HeteroRanking} \\
\textbf{Input:} Preference tournament $T$ with parameters:\\ $\:\:\:\:\:\:\:\:\:\:\:\:\:\:\:\:p_{u}, p_{m}, k_{u}$ and a precision parameter $\epsilon$\\
\textbf{Output: }Partitioning: $\mathcal{P}=\{\mathcal{P}_{1},...,\mathcal{P}_{t}\}$ of all but at\\$\:\:\:\:\:\:\:\:\:\:\:\:\:\:\:\:\:\:\:\:$most $\epsilon$-fraction of $V(T)$ and orderings:\\ $\:\:\:\:\:\:\:\:\:\:\:\:\:\:\:\:\:\:\:\:\sigma_{1},...,\sigma_{t}$ of $\mathcal{P}_{1},...,\mathcal{P}_{t}$\\
\Begin{
  run \textit{$DagClustering(T, p_{u}, p_{m}, k_{u}, \epsilon)$} to obtain a partitioning $\mathcal{P}$\;
  run \textit{$Purify(\mathcal{P}, p_{u}, p_{m}, \epsilon)$} to obtain $\mathcal{R}$\;
  for every $\mathcal{P}_{i} \in \mathcal{P}$ run \textit{$QuickSort(T|\mathcal{P}_{i}, \mathcal{R} \cap \mathcal{P}_{i})$} to obtain an ordering within each cluster\;
 }
\end{algorithm}
\vspace{-0.31in}

The ranking algorithm (\textit{HeteroRanking}) uses clustering subroutine (\textit{DagClustering}) and the so-called \textit{Purify} subroutine (responsible for getting rid of 
outliers from the clusters of the learned clustering)
and orders the vertices within 
each part of the obtained partitioning  $\mathcal{P}$. 
The ordering is performed by the \textit{QuickSort} subroutine from \cite{mohri}
that uses as pivot points only points from the set $\mathcal{R}$ of "non-outliers" constructed by the \textit{Purify} procedure
(for a subset $\mathcal{X} \subseteq V(T)$ we denote by $QuickSort(T,\mathcal{X})$ the algorithm from \cite{mohri} applied to the tournament $T$, but with
pivot points taken from $\mathcal{X}$ instead of $V(T)$).
When the clustering and ordering of vertices within each part of $\mathcal{P}$ is done then the mechanism of answering queries is as follows:
if the query $(x,y)$ satisfies: $x, y \in \mathcal{S}_{i}$, where $\mathcal{S}_{i} \in \mathcal{P}$, then output a point according to the computed ordering of $\mathcal{S}_{i}$.
Otherwise answer randomly.
The clustering algorithm (\textit{DagClustering}) uses the so-called \textit{gadget} structure $H$. Gadget is a small pseudo-random tournament. The only property that we want
the gadget to satisfy is to have at least one backward edge under every ordering of every subset $S \subseteq V(H)$ of size $|S| \geq \frac{|H|}{k_{u}}$. A random tournament is a gadget with high proability (details in Section \ref{app_5}). Thus gadget can be trivially constructed in advance before the main clustering algorithm starts.
There are also standard deterministic constructions of gadgets (the so-called \textit{quadratic residue tournaments}, see \cite{spencer}).  

The algorithm uses also \textit{Find} procedure, which is essentially a wrapper for the \textit{Searcher} subprocedure. It takes as an input a digraph $T_{1}$ (a subgraph of $T$)
and tries to find a special embedding of $H$ in $T_{1}$. 
It either finds this embedding (if this is the case the procedure returns the copy $H_{c}$ of $H$)
or returns two sets: $X, Y$. 
The directed density $d(X,Y)$ from $X$ to $Y$ is very close to $0$ or $1$. This, as we will see later,
implies that with very high probability most of the vertices of $X \cup Y$ came from the same groundtruth domain. In other words, we obtained a set of vertices of very good
purity. Thus the algorithm uses the local property of not having a particular pattern as a subtournament to reconstruct a significant part of the groundtruth cluster.
This set is then added to the appropriate cluster of the partial clustering that was already calculated (or potentially forms a new cluster).
The embedding we are looking for in \textit{Searcher} is a very simple one, where each vertex is being looked for in the different part
of the random partitioning of vertices into $h$  equal-length chunks.

\vspace{-0.31in}
\begin{algorithm}
 \textbf{Algorithm 2 - DagClustering} \\
 \textbf{Input:} Preference tournament $T$ with parameters:\\ $\:\:\:\:\:\:\:\:\:\:\:\:\:\:\:\:p_{u}, p_{m}, k_{u}$ and a precision parameter $\epsilon$\\
 \textbf{Output:} Partitioning: $\mathcal{P}=\{\mathcal{P}_{1},...,\mathcal{P}_{t}\}$ of all but at\\$\:\:\:\:\:\:\:\:\:\:\:\:\:\:\:\:\:\:\:\:$most $\epsilon$-fraction of $V(T)$ \\
 \Begin{
  let $H = H(k_{u})$ be a gadget\;
  let $T_{1} = T$ and $\mathcal{P} = \emptyset$\;
  \While{$|T_{1}| \geq \epsilon |T|$}{
    run \textit{$Find(H, T_{1}, \epsilon, p_{m})$}\;
    \eIf{\textit{Find} returns a copy $H_{c}$ of $H$}{
      delete all the edges of $H_{c}$ from $T_{1}$\;
      } {
       let $Z = X \cup Y$, where $X, Y$ are the sets output by \textit{Find}\;
       let $\mathcal{S}_{i} = Z \cup \mathcal{P}_{i}$ for $i=1,...$, where $\mathcal{P}=\{\mathcal{P}_{1},...\}$\;
       let $back=(\frac{het(T)}{6}+2\epsilon)|Z||\mathcal{P}_{i}|p_{u}$\;
       let $i$ be the smallest index for which
       $QuickSort(T|\mathcal{S}_{i})$ outputs an ordering with no more than $back$ backward edges in T with one endpoint in Z and the other in $\mathcal{P}_{i}$\;
       \eIf{$i$ exists}{
          replace in $\mathcal{P}$ cluster $\mathcal{P}_{i}$ by $\mathcal{S}_{i}$; 
        } {
          update: $\mathcal{P} \leftarrow \mathcal{P} \cup \{Z\}$; 
        }
        update: $V(T_{1}) \leftarrow V(T_{1}) \setminus Z$\; 
      }
  }
  output $\mathcal{P}$ \;
 }
\end{algorithm}
\vspace{-0.31in}

The \textit{Purify} subroutine gets as an input a partitioning, where each part is a good approximation of the groundtruth cluster and 
eliminates outliers from each cluster. This can be effectively done by observing that outliers contribute in a much bigger extent to the total 
number of directed triangles of a particular type in the cluster than other nodes. 
Notice that the \textit{Purify} procedure is not used by the digraph clustering algorithm. Since it does not shed any light
on our main contribution in this paper - the clustering algorithm, we will comment more on that procedure in the Appendix.

\section{Main theoretical results}

Now we state main theoretical results regarding algorithms presented in the previous section. All the proofs are given in the Appendix.

\vspace{-0.31in}
\begin{algorithm}
 \textbf{Algorithm 3 - Find} \\
 \textbf{Input:} Tournament $H$ with $V(H)=\{v_{1},...,v_{|H|}\},$\\ $\:\:\:\:\:\:\:\:\:\:\:\:\:\:\:\:$ 
                 digraph $T_{1}$, parameters: $\epsilon,p_{m}$\\
 \textbf{Output:} A copy $H_{c}$ of $H$ in $T_{1}$ or two sets\\ $\:\:\:\:\:\:\:\:\:\:\:\:\:\:\:\:\:\:\:\:X, Y \in V(T_{1})$ \\
 \Begin{
  \textbf{Initialization:\\}
  $\:\:\:\:\:$let $h=|H|$, $n_{1}=|T_{1}|$, $c=\frac{1}{4}\epsilon p_{m}$ and $S = \emptyset$\;
  $\:\:\:\:\:$partition randomly $V(T_{1})$ into $h$ sets \\$\:\:\:\:\:W_{1},...,W_{h}$, each of size $\lfloor \frac{n_{1}}{h} \rfloor$\;
  \textbf{return $Searcher(H,\{W_{1},...,W_{h}\},S)$}\;
 }
\end{algorithm}
\vspace{-0.31in}

Our first result is about the general planted partition model for the preference tournament and shows that \textit{DagClustering} algorithm
reconstructs with very good precision groundtruth domains.

\begin{theorem}
\label{clusttheorem}
Let $T$ be a preference tournament with parameters $p_{u}, p_{m}, k_{u}$ and $k$ groundtruth domains
($k$ does not have to be publicly available). Assume that $het(T) \geq 12$, each groundtruth domain is of size at least two
and has on expectation at least $\log(|T|)$ backward edges under its canonical ordering.
Let $\epsilon$ be a precision parameter satisfying: $2h\sqrt{\frac{3k_{u}h}{het(T)}} \leq \epsilon \leq \min(\frac{1}{k_{u}},\frac{1}{4}p_{m})$. 
Then for $|T|$ large enough with probability $p_{succ} = 1-o(1)$ algorithm \textit{DagClustering} with input parameters: $p_{u}, p_{m}, k_{u}$ and $\epsilon$ outputs an $(1-\epsilon)$-pure partitioning
$\mathcal{P}=\{\mathcal{P}_{1},...,\mathcal{P}_{k^{'}}\}$ of all but at most an $\epsilon$-fraction of all the vertices
of $V(T)$ for some $0 < k^{'} \leq k$.
\end{theorem}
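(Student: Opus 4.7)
The plan is to verify two invariants throughout the main while loop of \textit{DagClustering}: every cluster currently in $\mathcal{P}$ is $(1-\epsilon)$-pure, and at termination all but an $\epsilon$-fraction of $V(T)$ has been assigned to some cluster. I would prove these jointly by induction on the iteration counter, supported by three sublemmas concerning the \textit{Find} procedure, the QuickSort-based merge test, and the bookkeeping of removed edges and vertices.

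First I would analyse \textit{Find}. The random equipartition of $V(T_{1})$ into $h=|H|$ parts inherits, via a Chernoff bound, the intersections with the groundtruth domains in roughly the correct proportions. If \textit{Searcher} fails to embed $H$, then by the defining property of the gadget (every ordering of every subset of size $\geq |H|/k_{u}$ has a backward edge) a counting argument forces some pair of parts $W_{i},W_{j}$ to contain large subsets $X\subseteq W_{i}$, $Y\subseteq W_{j}$ whose directed density $d(X,Y)$ lies outside the window $[c,1-c]$ with $c=\tfrac{1}{4}\epsilon p_{m}$; conversely, if every such density lies inside that window, one can inductively assemble a copy of $H$. This dichotomy is the structural heart of the argument.

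Next, I would leverage the cross-domain lower bound $p_{i,j}\geq p_{m}$ to argue that any pair $X,Y$ returned by \textit{Find} is essentially monochromatic. If $X$ or $Y$ contained an $\epsilon$-fraction of vertices from a second domain, then concentration over the $|X||Y|$ independent edges would place $d(X,Y)$ inside $[\Omega(p_{m}),1-\Omega(p_{m})]$, contradicting $d(X,Y)\notin[c,1-c]$; the constraint $\epsilon\leq\tfrac{1}{4}p_{m}$ is precisely what makes the error window strictly narrower than the cross-domain separation, so $Z=X\cup Y$ is $(1-\epsilon)$-pure. I would then turn to the merge test: same-domain cross cuts contribute on expectation only $\Theta(|Z||\mathcal{P}_{i}|p_{u})$ backward edges, while cross-domain cross cuts contribute $\Omega(|Z||\mathcal{P}_{i}|p_{m})$, and with $het(T)\geq 12$ the threshold $(\tfrac{het(T)}{6}+2\epsilon)|Z||\mathcal{P}_{i}|p_{u}$ cleanly separates the two regimes (using that \textit{QuickSort} of \cite{mohri} is a constant-factor approximation to the optimal feedback-arc set). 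The inductive purity of the existing $\mathcal{P}_{i}$ then guarantees that merging preserves invariant (i). For termination and coverage, every gadget-found iteration deletes only $|E(H)|=O(1)$ edges while every dense-pair iteration removes $\Omega(n_{1}/h)$ vertices, so the total number of iterations is polynomial in $|T|$ and the complement of $\bigcup\mathcal{P}_{i}$ shrinks below $\epsilon|T|$; a final union bound over all the probabilistic events (Chernoff for the partition, for each directed density, and for the QuickSort outputs, which is where the hypothesis of $\Omega(\log|T|)$ expected backward edges per domain is used) gives success probability $1-o(1)$.

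The main obstacle, I expect, will be calibrating the density cutoff $c$ inside \textit{Find} against the threshold used in the merge step. On the one hand $c$ must be small enough that cross-domain densities never trigger a false output; on the other it must be large enough that, whenever \textit{Find} does output a pair, the $p_{m}$ gap certifies purity at level $\epsilon$. Simultaneously the merge threshold has to absorb both the QuickSort approximation ratio and the $\epsilon$-slack in purity. The hypotheses $het(T)\geq 12$, $\epsilon\leq\min(1/k_{u},\tfrac{1}{4}p_{m})$, and the technical lower bound $\epsilon \geq 2h\sqrt{3k_{u}h/het(T)}$ are exactly the quantities consumed by this calibration, and I expect the bulk of the technical effort to live in making these three inequalities balance simultaneously.
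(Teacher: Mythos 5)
Your overall architecture tracks the paper's proof closely: induction on the loop with a purity invariant, the Erd\H{o}s--Hajnal-style dichotomy inside \textit{Find}, purity of the returned pair $(X,Y)$ from the cross-domain density gap $p_m$ versus the cutoff $c=\frac{1}{4}\epsilon p_m$, and the merge test separated by $het(T)\geq 12$ together with the factor-$3$ QuickSort approximation. However, there is a genuine gap in how you handle the gadget-found branch. You bound each such iteration by ``deletes only $|E(H)|=O(1)$ edges'' and the number of iterations by ``polynomial in $|T|$,'' which permits up to $\Theta(n^{2})$ edges to be deleted over the run. That is fatal for your own purity argument: the pair $(X,Y)$ is certified pure because its observed one-way density is below $c|X||Y|$, but if a quadratic number of edges may have been silently removed from $T_1$, the low observed density can be an artifact of deletion rather than evidence that $X\cup Y$ lies in one domain. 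The paper closes this with the one idea your sketch is missing: by pigeonhole, any copy of $H$ found in $T_1$ places at least $h/k_u$ of its vertices in a single groundtruth domain, and the gadget property (every subset of size $\geq h/k_u$ has a backward edge under every ordering) then forces that copy to contain a \emph{bad} edge, i.e.\ an edge backward under that domain's canonical ordering. Since the total number of bad edges is, by Chernoff, at most $M=(1+\delta)\sum_i \binom{n_i}{2}p_i$ with high probability (this is where the hypothesis of $\geq\log|T|$ expected backward edges per domain is consumed --- not in the QuickSort analysis, as you state), at most $M$ copies are ever found and at most $\binom{h}{2}M$ edges deleted, and the lower bound $\epsilon\geq 2h\sqrt{3k_uh/het(T)}$ is exactly what makes this deletion budget negligible against $|X||Y|p_m$ in the density estimate.

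Relatedly, you invoke the gadget property in the wrong place: the dichotomy in \textit{Searcher} (embed $H$ or output a sparse pair) is the plain Erd\H{o}s--Hajnal counting argument and holds for \emph{any} tournament $H$; the gadget property buys nothing there. Its sole purpose is the bad-edge certificate above. A smaller point worth tightening: the sets $X,Y$ are chosen adaptively by the algorithm, so the concentration statement must be a union bound over all $2^{n}\cdot 2^{n}$ candidate pairs of linear-sized sets (this is why the lower bounds $|X|\geq c^{h-1}\lfloor \epsilon n/h^{2}\rfloor$, $|Y|\geq c^{h-1}\lfloor \epsilon n/h\rfloor$ matter), and likewise the ``different domains give many backward edges'' half of the merge lemma needs a union bound over all $n!$ orderings, not just the one QuickSort returns.
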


Next theorem gives an upper bound on the generalization error of the \textit{HeteroRanking} algorithm for the introduced majority-voting model. 
 The following is true:

\begin{theorem}
\label{ranktheorem}
Assume the majority-voting model $\mathcal{M}$.
Let $\epsilon \leq p_{m}(\frac{1}{128}p_{m}-\frac{4}{het(T)})$.
Let $\mathcal{T}$ be an $(M,m)$-unbalanced training set.
Assume that the number of objects to rank is large enough
and that the related preference tournament $T_{\mathcal{M}}$ satisfies the conditions given in the statement of
Theorem~\ref{clusttheorem}.
Let $N$ be the total number of queries asked. Then for the average preference tournament with probability $p=1-o(1)$ the ranking mechanism defined by the output of the algorithm \textit{HeteroRanking} answers correctly at least: $N\frac{M}{M+m}(1-2\epsilon)^{2}(1-4p_{u})$ queries.
The average is taken under random coin tosses from the training phase. The probability $p$ is taken under random
coin tosses from the test phase.
\end{theorem}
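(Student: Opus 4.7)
\medskip

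\noindent\textbf{Proof sketch (proposal).}
The plan is to combine three ingredients: (i) Theorem~\ref{clusttheorem} applied to the preference tournament $T_{\mathcal{M}}$ that the majority-voting model induces; (ii) Chernoff concentration translating the majority-voting parameters $(p_{succ},M,m)$ into the planted-partition parameters $(p_{u},p_{m})$; and (iii) the performance guarantee of the \emph{QuickSort} routine of \cite{mohri} when its pivots are restricted to the ``purified'' set $\mathcal{R}$ produced by \emph{Purify}. The overall quantity to be bounded is the probability that a random query $\{x,y\}\sim \mathcal{P}_{\mathcal{U}}$ lands inside a single cluster of $\mathcal{P}$, that both $x$ and $y$ are non-outliers of that cluster, and that the ordering computed on that cluster puts them in the right relative position. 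I would bound these three events separately and multiply.

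First I would verify the hypotheses of Theorem~\ref{clusttheorem}. A Chernoff bound shows that, for a within-domain pair, the probability that the majority of the $\geq M$ votes is wrong is exponentially small in $M(p_{succ}-\tfrac12)^2$, giving the upper bound $p_{u}$. For a between-domain pair, where at most $m$ votes are cast with arbitrary bias, a matching two-sided Chernoff estimate yields a nontrivial lower bound $p_{m}$ on the probability of each orientation. The precise dependence is derived in Section~\ref{app_2}, as stated in the paper, and implies $het(T_{\mathcal{M}})\geq 12$ when $M/m$ is large enough, which is the regime forced by $\epsilon \leq p_{m}(\tfrac{1}{128}p_{m}-\tfrac{4}{het(T)})$. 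Applying Theorem~\ref{clusttheorem} then yields, with probability $1-o(1)$ over the training randomness, an $(1-\epsilon)$-pure partitioning $\mathcal{P}$ of all but an $\epsilon$-fraction of the vertices of $T_{\mathcal{M}}$.

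Next I would analyze the inner loop of \emph{HeteroRanking}. Fix a cluster $\mathcal{P}_{i}$ and let $\mathcal{G}^{*}$ be the groundtruth domain contributing at least a $(1-\epsilon)$-fraction of $\mathcal{P}_{i}$; the \emph{Purify} subroutine (to be analyzed in the Appendix via counting directed triangles of a fixed pattern, whose expected count in $T_{\mathcal{M}}$ is far smaller for a true member of $\mathcal{G}^{*}$ than for an outlier) removes essentially all outliers, so the set $\mathcal{R}\cap\mathcal{P}_{i}$ consists of vertices from $\mathcal{G}^{*}$ up to a sub-$\epsilon$ fraction. Feeding these vertices as pivots into the \emph{QuickSort} procedure of \cite{mohri}, all comparisons that determine the final ordering between two non-outliers of $\mathcal{G}^{*}$ go through a pivot in $\mathcal{G}^{*}$, so the analysis of \cite{mohri} gives that the expected number of within-$\mathcal{G}^{*}$ pairs ranked contrary to $\theta_{i}$ is at most a constant multiple of the number of canonical backward edges inside $\mathcal{G}^{*}$, which is at most $2 p_{u}\binom{|\mathcal{G}^{*}|}{2}$ in expectation. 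Hence a $1-4p_{u}$ fraction of within-$\mathcal{G}^{*}$ pairs are ranked correctly, which is the source of the $(1-4p_{u})$ factor.

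Finally I would combine these estimates for a random test query. For $\{x,y\}\sim \mathcal{P}_{\mathcal{U}}$, the $(M,m)$-unbalanced assumption (together with the fact that the same distribution underlies training and testing) yields that the mass on within-domain pairs is at least $\tfrac{M}{M+m}$ of the total mass, supplying the first factor. Conditioning on a within-domain query $\{x,y\}\subseteq \mathcal{G}_{j}$, the probability that each of $x$ and $y$ lies in the purified core of the cluster that captured $\mathcal{G}_{j}$ is at least $1-2\epsilon$ (from the $(1-\epsilon)$-pure guarantee combined with the at most $\epsilon$-fraction of discarded vertices), giving $(1-2\epsilon)^{2}$ by independence of the two draws. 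Conditioning on both events, the QuickSort analysis above says the pair is ordered correctly with probability at least $1-4p_{u}$. Multiplying the three factors and summing over the $N$ independent queries gives the stated lower bound.

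The hardest step is the one in the middle: showing that after \emph{Purify} the pivots are clean enough that the \cite{mohri} analysis applies to the induced tournament on $\mathcal{G}^{*}$ with error comparable to the intrinsic $p_{u}$-noise, despite the presence of $\epsilon$-fraction contamination and despite the pivot choices being correlated with that contamination. This requires a careful union bound over pivot choices and a conditional argument that removes the bias introduced by the fact that \emph{Purify} has already inspected the structure of $\mathcal{P}_{i}$; I expect this to be the main technical burden, with everything else reducing to Theorem~\ref{clusttheorem} and routine Chernoff estimates.
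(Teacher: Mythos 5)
Your proposal is correct and follows essentially the same route as the paper: reduce to Theorem~\ref{clusttheorem}, account for the three multiplicative losses $\frac{M}{M+m}$ (within-domain query mass), $(1-2\epsilon)^{2}$ (unpartitioned vertices plus outliers), and $(1-4p_{u})$ (canonical backward edges plus the $3$-approximation guarantee of \emph{QuickSort}), with the remaining work being the triangle-counting correctness of \emph{Purify}. The only substantive remark is that the dependence issue you flag as the main technical burden is dispatched in the paper simply by running \emph{Purify} on an independently drawn tournament $T^{c}$ with the same parameters, so no conditional debiasing argument is needed.
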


In the statement above we can in fact get rid of averaging since the random variables under consideration are tightly concentrated around their means. Because it follows immediately from classic concentration inequalities, we leave this check
now and present it in the Appendix.
Since we have $ M \gg m$ and $p_{u} \ll 1$,  the presented ranking scheme answers correctly most of the queries. 
In comparison, existing state-of-the-art methods succeed much less frequently in this setting.
In particular, most of them are very far from achieving a recall close to $1$. In Section \ref{app_9} we will prove it and explain in more detail why
the standard approach is very ineffective under high heterogeneity assumptions.

\vspace{-0.31in}
\begin{algorithm}
  \textbf{Algorithm 4 - Searcher}$(G, \{W_{j},...,W_{h}\},S)$\\
   \textbf{Input:} Tournament $G$ with $V(G)=\{v_{j},...,v_{h}\}$,\\ 
    $\:\:\:\:\:\:\:\:\:\:\:\:\:\:\:\:$set of subsets of vertices: $W_{j},...,W_{h}$,
                         set of vertices $S$\; 
   \textbf{Output:} a pair of disjoint sets $(X,Y)$ or a copy of the gadget $H$\;
   \Begin{
    \If {$|G| = 0$} {output a tournament induced by $S$\;}
    find in $W_{j}$ a vertex $w$ with the following property for every $i={j+1,...,h}$:\\
    $\:\:\:\:\:\:\:\:\:$ $w$ is adjacent to at least $c|W_{i}|$ vertices of $W_{i}$ if $(v_{j},v_{i}) \in E(G)$, and\\
    $\:\:\:\:\:\:\:\:\:$ $w$ is adjacent from at least $c|W_{i}|$ vertices of $W_{i}$ if $(v_{i},v_{j}) \in E(G)$\;
  \eIf{$w$ is found}{
    update $S \leftarrow S \cup \{w\}$\;
    let $N_{w}(i)$ (for $i=j+1,...,h$) be:\\ $\:\:\:\:\:\:\:\:\:$a set of outneighbors of $w$ in $W_{i}$ if $(v_{j},v_{i}) \in E(G)$ and\\
     $\:\:\:\:\:\:\:\:\:$a set of inneighbors of $w$ in $W_{i}$ if $ (v_{j},v_{i}) \notin E(G)$\;
    let $G^{r}=G|\{v_{j+1},...,v_{h}\}$ and update: $W_{i} \leftarrow N_{w}(i)$ for $i=j+1,...,h$; \\ output $Searcher(G^{r},\{W_{j+1},...,W_{h}\},S)$\;
    }{
      (by the Pigeonhole Principle) there exists a set $X \subseteq W_{j}$ of order
      $|X| \geq \frac{|W_{j}|}{h-j+1}$ and an index $i^{*} \in \{j+1,...,h\}$
      with the following property:

       $\:\:\:\:\:\:\:\:\:$ either every $x \in X$ has at most $c|W_{i^{*}}|$ outneighbors in $W_{i^{*}}$ or\\
       $\:\:\:\:\:\:\:\:\:\:\:\:\:\:\:\:\:\:\:\:\:\:$ every $x \in X$ has at most $c|W_{i^{*}}|$ inneighbors in $W_{i^{*}}$\;

      let $Y = W_{i^{*}}$. \textbf{Output:} $(X,Y)$\;
    }
    }
\end{algorithm}
\vspace{-0.31in}

\section{Conclusions and future work}

We showed new algorithm performing clustering in the digraph setting. Contrary to almost all of other results on clustering digraphs,
the goal is not to partition the tournament into pseudo-strongly-connected components, but into subtournaments that can be made transitive by reversing 
only few edges. This enables us to use the algorithm as a preprocessing phase of learning how to rank from heavily perturbed preference tournaments.
To the best of our knowledge, this is the first approach of this kind that addresses at the same time
and tightly connects two important problems of modern computer science: data clustering in the directed setting and ranking. 
As a corollary, we obtain new purely combinatorial ranking algorithm and use it to effectively rank with preference tournaments constructed according to the majority-voting model.
Experimental results show the advantage of our approach over top state-of-the-art methods.
The algorithm can be viewed as a general tool for finding local nonrandom substructures in the heterogeneous network that globally looks like a random graph.

\vspace{-0.31in}
\begin{algorithm}
 \textbf{Algorithm 5 - Purify} \\
 \textbf{Input:} A partitioning $\mathcal{P}$, parameters: $p_{u}, p_{m}$ and a precision parameter $\epsilon$\\
 \textbf{Output:} Set of nonoutliers $\mathcal{R}$  \\
 \Begin{
  $\mathcal{R} \leftarrow \emptyset$\;
  let $T^{c}$ be preference tournament with parameters $p_{u}, p_{m}$ \\
  (thus of the same characteristic as $T$ but obtained independently from $T$)\;
  \For{$\mathcal{P}_{i} \in \mathcal{P}$}  
  {
     let $threshold = \frac{(1-\epsilon)^{2}}{32}|\mathcal{P}_{i}|^{2}p_{m}^{2}$\;
    \For{$v \in \mathcal{P}_{i}$} 
    {
       let $N^{+}_{v}$ be the set of outneighbors of $v$ in $T^{c}|\mathcal{P}_{i}$\;
       let $N^{-}_{v}$ be the set of inneighbors of $v$ in $T^{c}|\mathcal{P}_{i}$\;
       choose $s =\Theta(\log(n))$ samples uniformly at random from the set $N^{+}_{v} \times N^{-}_{v}$\;
       let $r$ be the number of samples $(u,w)$ such that $(u,v) \in E(T^{c})$\;
       \eIf {$\frac{r}{s}|N^{+}_{v}||N^{-}_{v}| \geq threshold$}{
         classify $v$ as an outlier\;
       }{
          $\mathcal{R} \leftarrow \mathcal{R} \cup \{v\}$\;
       }
    }\
  }\
  output $\mathcal{R}$\;
 }
\end{algorithm}
\vspace{-0.31in}

It aims to work well for very large sets of objects for which no entire preference graph is necessarily immediately known. It achieves this goal by acting locally on the preference graph, reconstructing clustering (that will be used later on to rank) part by part. Thus the authors plan to present the parall version of the algorithm in the next paper.
It would be also interesting to use similar techniques to those presented here to propose new clustering algorithm in the undirected setting.

\bibliographystyle{unsrt}
\bibliography{ranking}

\newpage
\onecolumn


\section{Appendix}
\subsection{Introduction}
\label{app_1}

In the Appendix we will comment more on the technical $\textit{Purify}$ procedure from the main body of the paper. 
We also prove the correctness of  the \textit{HeteroRanking} and \textit{DagClustering} algorithms by proving Theorem~4.1 and Theorem~4.2.
Both theorems will be proved by showing slightly more general and technical results.
Before proving Theorem~4.2, we will remind the reader the majority-voting scheme.
Furthermore, we will show why the model we are analyzing here can be used to describe weighted preference tournament setting too.
We will also show theoretical comparison between the quality of the ranking obtained by the $\textit{HeteroRanking}$ algorithm and state-of-the-art methods.
At the very end we show results of the experiments comparing our method with state-of-the-art techniques.

Let us remind that for a directed edge $(v,w)$ in a digraph $T$ we say that \textit{$v$ is adjacent to $w$} and
\textit{$w$ is adjacent from $v$}. Alternatively, we may say that $w$ is an outneighbor of $v$ and $v$ is an inneighbor of $w$.
For a set $\mathcal{A}$ we denote by $s(\mathcal{A})$ the set of all unordered pairs of different elements from $\mathcal{A}$,
namely: $s(\mathcal{A}) = \{\{x,y\}:x,y \in \mathcal{A}, x \neq y\}$.

\subsection{Ranking via majority-voting}
\label{app_2}

We will now remind the reader the majority-voting model in the context of the preference tournament. This is just one example how heterogeneous preference tournaments,
encoding ranking statistics, may be straightforwardly created by the nonuniform data. It is probably the easiest one to describe.
For other models (that we did not focus on in this paper) we can also benefit from applying the presented algorithm
for the same reasons that will soon become obvious.

Assume that the users compare certain products that come from different domains. Products from the same domain
are being compared more often than from different domains (there are many reasons for why this might be the case, as mentioned earlier, it may even not make sense to compare different domains). 
Let us assume though that there exists some groundtruth ranking of all the objects. From what we have said so far
it is clear that this ranking will play an important role only for pairs of objects within the same domain.

\begin{definition}
Let $\mathcal{U}$ be the universe of all the objects. We denote by $\mathcal{P}_{\mathcal{U}}$ the probability distribution on $s(\mathcal{U})$ from which pairs of evaluated objects (in the training phase) or pairs of objects to evaluate (in the test phase) are being selected.
\end{definition}

Set $s(\mathcal{U})$ forms an input for users' evaluations and $\mathcal{P}_{\mathcal{U}}(\{u,v\})$ is the probability
that next collected statistic will regard objects: $u$ and $v$.
In this paper we are interested in $\mathcal{P}_{\mathcal{H}}$ that is very far from being uniform.
Assume that $\mathcal{U} = \mathcal{U}_{1} \cup ... \cup \mathcal{U}_{k}$ and most of the mass of the distribution $\mathcal{P}_{\mathcal{H}}$ is concentrated on  the set:
$s(\mathcal{U}_{1}) \cup ...\cup s(\mathcal{U}_{k})$. 
Assume furthermore that when all the statistics for an unordered pair of objects $\{u,v\}$ are collected then the direction of an
edge in the preference tournament is determined by voting. Direction of an edge
is consistent with the one given by the majority of voters.
Each statistic (each vote) is given independently at random and the probability that a user made a mistake
(i.e. gives a preference not consistent with the groundtruth ranking) is $p_{mis} < \frac{1}{2}$
which may be substantial (potentially even very close to $\frac{1}{2}$) also for objects from  the same domain. A certain user is allowed to make mistakes
but since many users will be taken into account while determining a direction of each edge in the preference tournament,
the law of large numbers saves us. 

There is a caveat here that will lead us to the preference tournament model analyzed in the main body of the paper. If the number of statistics/votes is not large enough then 
taking the majority model may not be sufficient to reconstruct groundtruth ordering. Let us quantify this last statement.
If we obtain $K$ statistics for a certain unordered pair of objects $\{u,v\}$ then standard concentration
inequalities such as Chernoff's inequality, give us: $P_{mis}({u,v}) \leq e^{-\frac{\delta^{2}}{2+\delta}Kp_{succ}}$,
where: $P_{mis}$ stands for the probability of an event that an edge between $u$ and $v$ in the preference tournament will not be consistent with the groundtruth clustering, $\delta = (\frac{1}{2}-p_{mis})$ and $p_{succ} = 1 - p_{mis}$. So if $K$ is large enough then the upper bound $p_{u}$ on the probability of the mistake
will be small. However if $K$ is not too large, it may turn out that not only $p_{mis}$ but even $P_{mis}$ will be significant (this could be the case in particular when $p_{mis}$ is very close to $\frac{1}{2}$).
In this case both the probability that a direction of an edge in the preference tournament will be right and wrong
are lower-bounded by some substantial $p_{m}$. This is how the parameters $p_{u}$ and $p_{m}$ of the preference tournament come into action. They reflect the nonuniform distribution $\mathcal{P}_{\mathcal{H}}$.
We will give now a full definition of the $(M,m)$-unbalanced training set that was used in the main body of the paper.

\begin{definition}
Let $\mathcal{U}$ be the universe of all the objects and let $\{\mathcal{U}_{1},...,\mathcal{U}_{k}\}$ be the partitioning of $U$. 
Let $\mathcal{P}_{\mathcal{U}}$ be a probability distribution on 
$s(\mathcal{U})$ that describes the distribution of the elements (unordered pairs of points) used as an input for training. 
Let $\mathcal{T} \subseteq \mathcal{U} \times \mathcal{U}$ be a training set (directed pairs encode users' preferences) for which
the corresponding unordered pairs were chosen from the distribution $\mathcal{P}_{\mathcal{U}}$ and the preferences
where chosen according to the majority-voting scheme with a paramter $p_{mis}$.
We say that $\mathcal{T}$ is $(M,m)$-unbalanced with respect to the partitioning
$\{\mathcal{U}_{1},...,\mathcal{U}_{k}\}$ (or simply: $(M,m)$-unbalanced if the partitioning is clear from the context) for $M>m$ if
the following holds:
\begin{itemize}
\item $\mathcal{P}_{\mathcal{U}}(\{u,v\}) |\mathcal{T}| \geq M$ for $u,v \in \mathcal{U}_{i}$ ($i=1,...,k$), $u \neq v$ and
\item $\mathcal{P}_{\mathcal{U}}(\{u,v\}) |\mathcal{T}| \leq m$ for $u \in \mathcal{U}_{i}$, $v \in \mathcal{U}_{j}$, $i \neq j$. 
\end{itemize}
\end{definition}

In other words, we want to get on average the feedback from at least $M$ users for every pair of points from the same domain
and at most $m$ for every pair of points from different domains.

Denote the majority-voting model with the $(M,n)$-unbalanced training set described above as $\mathcal{M}$.
We denote by $T_{\mathcal{M}}$ the preference tournament model related to $\mathcal{M}$.
The parameters: $p_{u},p_{m}$ of $T_{\mathcal{M}}$ may be easily derived from the parameters $M,m, p_{mis}$ of $\mathcal{M}$ so we will not give explicit formulas here. For us it sufficies to know that the bigger $M$ and the smaller $m$, the bigger $p_{m}$ and the smaller $p_{u}$ we may take.

\subsection{\textit{Purify} procedure}
\label{app_3}

The \textit{Purify} procedure (we give it again in the Appendix for the convenience of the reader)
gets as an input a partitioning, where every part is $(1-\epsilon)$-pure with high probability.
Its goal is to get rid of the set of outliers from each part of the partitioning since the pivot points that will be used
in the \textit{QuickSort} algorithm cannot be outliers. This task can be accomplished in several different ways and is
much easier than the initial clustering problem on the directed graph since \textit{Purify} operates on the very
good approximation of the groundtruth clustering. 

\begin{algorithm}[H]
 \textbf{Algorithm 5 - Purify} \\
 \textbf{Input:} A partitioning $\mathcal{P}$, parameters: $p_{u}, p_{m}$ and a precision parameter $\epsilon$\\
 \textbf{Output:} Set of nonoutliers $\mathcal{R}$  \\
 \Begin{
  $\mathcal{R} \leftarrow \emptyset$\;
  let $T^{c}$ be preference tournament with parameters $p_{u}, p_{m}$ \\
  (thus of the same characteristic as $T$ but obtained independently from $T$)\;
  \For{$\mathcal{P}_{i} \in \mathcal{P}$}  
  {
     let $threshold = \frac{(1-\epsilon)^{2}}{32}|\mathcal{P}_{i}|^{2}p_{m}^{2}$\;
    \For{$v \in \mathcal{P}_{i}$} 
    {
       let $N^{+}_{v}$ be the set of outneighbors of $v$ in $T^{c}|\mathcal{P}_{i}$\;
       let $N^{-}_{v}$ be the set of inneighbors of $v$ in $T^{c}|\mathcal{P}_{i}$\;
       choose $s =\Theta(\log(n))$ samples uniformly at random from the set $N^{+}_{v} \times N^{-}_{v}$\;
       let $r$ be the number of samples $(u,w)$ such that $(u,v) \in E(T^{c})$\;
       \eIf {$\frac{r}{s}|N^{+}_{v}||N^{-}_{v}| \geq threshold$}{
         classify $v$ as an outlier\;
       }{
          $\mathcal{R} \leftarrow \mathcal{R} \cup \{v\}$\;
       }
    }\
  }\
  output $\mathcal{R}$\;
 }
\end{algorithm}

One possible approach focuses on the number of directed triangles touching a given point of one of the computed clusters.
If that point is the outlier we expect quadratic number of directed triangles in the cluster touching that point
with the multiplicative constant next to the quadratic factor much bigger than $\epsilon$. On the other hand, if it is not an outlier then
the expected number of directed trangles touching that point will be at most qudratic with constant next to the quadratic
factor of the order of $\epsilon$.  That observations immediately leads to the algorithm detecting outliers:
For every point of the cluster we compute the number of directed triangles touching this point and if this number
is greater that certain threshold then we classify the point as an outlier. There are two things that should be noticed here.
Counting an exact number of the directed triangles touching any given point $v$ can take $\Theta(n^{2})$ time.
Fortunately we do not need an exact number, what is really needed is a good enough approximation.
This approximation might be obtained by sampling randomly from the set of unordered pairs: ${u,w}$, where
$u \in N^{+}_{v}$, $w \in N^{-}_{w}$, and: $N^{+}_{v}$ is the set of outneighbors of $v$ and $N^{-}_{v}$
is the set of inneighbors of $v$. While sampling we count the fraction of unordered pairs ${u,v}$ such that $(u,v) \in E(T)$. If this fraction is larger than a certain threshold then we classify a point as an outlier.
To resolve the issue with a dependence between the output of the \textit{HeteroRanking} algorithm 
and the direction of edges under investigation in the \textit{Purify} subprocedure, we run \textit{Purify}
on the new preference tournament $T^{c}$ obtained independently from $T$ but for a partitioning output by the
\textit{HeteroRanking} algorithm. Tournament $T^{c}$ is obtained from the same distribution as $T$.
In the subsection where we give the proof of Theorem 4.2 we also prove correcntess of the
\textit{Purify} algorithm presented above. We will prove in particular that the number of samples $s$ needed 
is a small multiplicity of $\log(n)$.

\subsection{Tools}
\label{app_4}

We will need two standard concentration inequalities. The First one is Chernoff's inequality:

\begin{theorem}
\label{chernoff}
Let $\delta > 0$. Let $X = \sum_{i=1}^{n}X_{i}$, where $X_{i}$s are independent and each $X_{i}$ equals $1$ with probability $p_{i}$ and is zero otherwise.
Denote $\mu = EX = \sum_{i=1}^{n} p_{i}$. Then the following holds:
\begin{itemize}
\item $\mathbb{P}(X > (1+\delta)) \mu < e^{-\frac{\delta^{2}}{\delta + 2} \mu}$,
\item $\mathbb{P}(X < (1-\delta)) \mu < e^{-\frac{\delta^{2}}{\delta + 2} \mu}$.
\end{itemize}
\end{theorem}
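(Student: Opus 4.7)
The plan is to prove both tail bounds via the standard exponential moment (Chernoff) technique, which reduces the problem to bounding the moment generating function of $X$ and then optimizing the free parameter.

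First I would handle the upper tail. For any $t>0$, apply Markov's inequality to $e^{tX}$:
\[
\mathbb{P}\bigl(X > (1+\delta)\mu\bigr) \;=\; \mathbb{P}\bigl(e^{tX} > e^{t(1+\delta)\mu}\bigr) \;\leq\; e^{-t(1+\delta)\mu}\,\mathbb{E}[e^{tX}].
\]
By independence of the $X_i$, the moment generating function factorizes as $\mathbb{E}[e^{tX}]=\prod_{i=1}^{n}\mathbb{E}[e^{tX_i}]=\prod_{i=1}^{n}\bigl(1+p_i(e^{t}-1)\bigr)$. Using the elementary inequality $1+x \leq e^{x}$ termwise yields
\[
\mathbb{E}[e^{tX}] \;\leq\; \prod_{i=1}^{n} e^{p_i(e^{t}-1)} \;=\; e^{(e^{t}-1)\mu}.
\]

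Next I would optimize in $t$. Substituting the bound above gives $\mathbb{P}(X>(1+\delta)\mu) \leq \exp\bigl(\mu(e^{t}-1) - t(1+\delta)\mu\bigr)$. Minimizing the exponent over $t>0$ gives $t=\ln(1+\delta)$, leading to the well-known bound
\[
\mathbb{P}\bigl(X > (1+\delta)\mu\bigr) \;\leq\; \left(\frac{e^{\delta}}{(1+\delta)^{1+\delta}}\right)^{\!\mu}.
\]
The main technical step, and really the only nontrivial one, is converting this to the clean form stated in the theorem. For this I would verify the analytic inequality
\[
(1+\delta)\ln(1+\delta) - \delta \;\geq\; \frac{\delta^{2}}{\delta+2} \qquad (\delta>0),
\]
which follows by checking that both sides vanish at $\delta=0$ and comparing derivatives (the derivative of the left side is $\ln(1+\delta)$, while the derivative of the right side is $\frac{\delta(\delta+4)}{(\delta+2)^{2}}$, and one checks the latter is dominated by $\ln(1+\delta)$ on $(0,\infty)$, e.g.\ by another derivative comparison at $\delta=0$). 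Exponentiating the inequality with factor $-\mu$ yields the claimed upper-tail bound.

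The lower tail is symmetric in method: pick $t<0$, repeat the Markov-plus-MGF argument for $\mathbb{P}(X<(1-\delta)\mu)$, optimize at $t=\ln(1-\delta)$, and invoke the analogous analytic inequality $(1-\delta)\ln(1-\delta)+\delta \geq \frac{\delta^{2}}{\delta+2}$ for $0<\delta<1$ (and note the bound is vacuous for $\delta\geq 1$ since $X\geq 0$). The hard part, such as it is, is simply the calculus step verifying the two elementary inequalities that sharpen the raw optimization output to the stated form; the probabilistic content is entirely standard.
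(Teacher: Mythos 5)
Your proof is correct: the exponential-moment argument, the optimization at $t=\ln(1+\delta)$, and the calculus verification that $(1+\delta)\ln(1+\delta)-\delta\geq\frac{\delta^{2}}{\delta+2}$ (and its lower-tail analogue) all check out. The paper itself states this result as a standard tool (Chernoff's inequality) and gives no proof, so there is nothing to compare against; your derivation is the canonical one and fills that gap adequately.
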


We will also need Azuma's inequality:

\begin{theorem}
\label{azuma}
Let $\{Z_{n},n \geq 1\}$ be a martingale. Let $Z_{0}=0$. Assume that $-\alpha \leq Z_{n} - Z_{n-1} \leq \beta$ for every $n \geq 1$.
Then the following is true:
\begin{itemize}
\item $\mathbb{P}(Z_{n} \geq nc) \leq e^{-\frac{2mc^{2}}{(\alpha + \beta)^{2}}}$,
\item $\mathbb{P}(Z_{n} \leq -nc) \leq e^{-\frac{2mc^{2}}{(\alpha + \beta)^{2}}}$.
\end{itemize}
More generally, if we have: $-\alpha_{i} \leq Z_{i} - Z_{i-1} \leq \beta_{i}$ then:
\begin{itemize}
\item $\mathbb{P}(Z_{n} \leq -a) \leq e^{-\frac{2a^{2}}{\sum_{i=1}^{n}(\alpha_{i} + \beta_{i})^{2}}}$,
\item $\mathbb{P}(Z_{n} \geq a) \leq e^{-\frac{2a^{2}}{\sum_{i=1}^{n}(\alpha_{i} + \beta_{i})^{2}}}$.
\end{itemize}
\end{theorem}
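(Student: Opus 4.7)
The plan is to use the exponential moment (Chernoff) method, which is the standard route to martingale concentration. I will prove the general (nonuniform) upper-tail bound; the uniform case then follows by taking $\alpha_i=\alpha$, $\beta_i=\beta$, $a=nc$, and the lower-tail bounds follow by applying the upper-tail argument to the martingale $\{-Z_n\}$, which satisfies the symmetric increment bounds $-\beta_i \le -(Z_i - Z_{i-1}) \le \alpha_i$.

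For the upper tail, fix $\lambda > 0$. By Markov's inequality,
\begin{equation*}
\mathbb{P}(Z_n \ge a) \;\le\; e^{-\lambda a}\, \mathbb{E}[e^{\lambda Z_n}].
\end{equation*}
I would expand $e^{\lambda Z_n} = e^{\lambda Z_{n-1}} \cdot e^{\lambda (Z_n - Z_{n-1})}$, let $\mathcal{F}_{n-1}$ be the natural filtration, and use the tower property together with the martingale condition $\mathbb{E}[Z_n - Z_{n-1}\mid \mathcal{F}_{n-1}] = 0$. The key auxiliary fact is Hoeffding's lemma: for any random variable $X$ with $\mathbb{E}[X\mid \mathcal{G}] = 0$ and $X \in [-\alpha, \beta]$ almost surely, one has $\mathbb{E}[e^{\lambda X}\mid \mathcal{G}] \le e^{\lambda^2 (\alpha+\beta)^2 / 8}$. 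Applying this conditionally on $\mathcal{F}_{n-1}$ to the bounded mean-zero increment $Z_n - Z_{n-1} \in [-\alpha_n, \beta_n]$ yields
\begin{equation*}
\mathbb{E}[e^{\lambda Z_n} \mid \mathcal{F}_{n-1}] \;\le\; e^{\lambda Z_{n-1}} \, e^{\lambda^2 (\alpha_n+\beta_n)^2 / 8}.
\end{equation*}
Iterating this inequality $n$ times gives $\mathbb{E}[e^{\lambda Z_n}] \le \exp\bigl(\tfrac{\lambda^2}{8} \sum_{i=1}^n (\alpha_i+\beta_i)^2\bigr)$.

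Combining with the Markov bound, I obtain $\mathbb{P}(Z_n \ge a) \le \exp\bigl(-\lambda a + \tfrac{\lambda^2}{8}\sum_i (\alpha_i+\beta_i)^2\bigr)$. I then optimize over $\lambda > 0$; the minimizer is $\lambda^\star = 4a / \sum_i (\alpha_i+\beta_i)^2$, which produces the desired bound
\begin{equation*}
\mathbb{P}(Z_n \ge a) \;\le\; \exp\!\left( -\frac{2 a^2}{\sum_{i=1}^n (\alpha_i+\beta_i)^2} \right).
\end{equation*}

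The only nontrivial step is Hoeffding's lemma, and this is the main obstacle. The standard proof I would write out uses convexity of $x \mapsto e^{\lambda x}$: for $X \in [-\alpha, \beta]$ one has $e^{\lambda X} \le \tfrac{\beta - X}{\alpha + \beta} e^{-\lambda \alpha} + \tfrac{\alpha + X}{\alpha + \beta} e^{\lambda \beta}$. Taking conditional expectation and using $\mathbb{E}[X\mid\mathcal{G}]=0$ leaves a function $\varphi(\lambda)$ whose second derivative one bounds by $(\alpha+\beta)^2/4$ via the AM-GM inequality; integrating $\varphi$ twice from $0$ then gives the claimed subgaussian estimate. Everything else in the proof is the mechanical Chernoff optimization, so once Hoeffding's lemma is in hand both parts of the theorem, including the uniform version with $nc$ in place of $a$, follow immediately.
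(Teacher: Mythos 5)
Your proposal is correct, and it is the canonical proof of the Azuma--Hoeffding inequality: Markov's inequality applied to $e^{\lambda Z_n}$, the conditional Hoeffding lemma for the bounded mean-zero increments, iteration via the tower property, and optimization at $\lambda^{\star}=4a/\sum_{i=1}^{n}(\alpha_i+\beta_i)^2$, with the lower tail obtained by applying the result to $\{-Z_n\}$. Note that the paper itself offers no proof to compare against: this theorem appears in the ``Tools'' section as a standard concentration inequality cited without argument, so your write-up supplies exactly the missing standard derivation. One small point in your favor: in the uniform case your computation yields the exponent $-\frac{2nc^{2}}{(\alpha+\beta)^{2}}$, which silently corrects a typo in the paper's statement, where the exponent is written as $-\frac{2mc^{2}}{(\alpha+\beta)^{2}}$ with an undefined symbol $m$ in place of $n$.
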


\subsection{Gadget tournament $H$}
\label{app_5}

Gadget tournament is a useful well-known mathematical tool from the main body of the paper which can be easily constructed. However for the completeness we comment more on the gadget here.
In this section we briefly describel how the gadget tournament used by the \textit{HeteroRanking} algorithm should be constructed. 

We want every subset of $V(H)$ of order
at least $\frac{|V(H)|}{k_{u}}$ to have at least one backward edge under every ordering of vertices.
We call this property \textit{the gadget-property}.
Tournament $H$ can be constructed randomly as the next lemma
states:

\begin{lemma}
\label{gadgetsizelemma}
Let $H$ be a tournament satisfying: $\frac{h}{4\log(h)+1} \geq k_{u}(1-\log(1-p))$ in which the direction of every edge is chosen independently at random with probability
$\frac{1}{2}$. Then with probability at least $p$ tournament $H$ satisfies the gadget property.
\end{lemma}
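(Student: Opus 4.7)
The plan is a direct first-moment / union-bound argument. First I would reformulate the gadget property: an ordering of $S \subseteq V(H)$ has zero backward edges in $H[S]$ iff $H[S]$ is transitive and the ordering is its unique linear extension, so the gadget property fails precisely when there exists $S$ with $|S|\geq h/k_u$ whose induced subtournament is transitive. For fixed $S$ of size $s$, $H[S]$ is a uniformly random tournament on $S$ and exactly $s!$ of the $2^{\binom{s}{2}}$ tournaments on $S$ are transitive (one per linear order), so $\Pr[H[S]\text{ transitive}] = s!/2^{\binom{s}{2}}$. Summing over the $\binom{h}{s}$ subsets of each size $s$ and over sizes $s\geq h/k_u$, and using $\binom{h}{s}s!\leq h^s$,
\[
\Pr[H\text{ fails the gadget property}] \;\leq\; \sum_{s\geq h/k_u}\binom{h}{s}\frac{s!}{2^{\binom{s}{2}}} \;\leq\; \sum_{s\geq h/k_u}\frac{h^s}{2^{s(s-1)/2}}.
\]
It then suffices to bound this sum by $1-p$.

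Next I would control the tail geometrically. The ratio of consecutive terms is $h/2^s$, which is at most $1/2$ whenever $s\geq \log_2 h+1$. The hypothesis $h/(4\log h+1)\geq k_u(1-\log(1-p))$, together with $1-\log(1-p)\geq 1$, yields $h/k_u\geq 4\log h+1$, comfortably above that threshold. Hence the sum is dominated by twice its first term $h^{s_0}/2^{s_0(s_0-1)/2}$ with $s_0=\lceil h/k_u\rceil$, and after taking base-$2$ logarithms the target reduces to the numerical inequality
\[
\tfrac{s_0(s_0-1)}{2} - s_0\log_2 h \;\geq\; 1 - \log_2(1-p),
\]
which follows by plugging in the lower bound on $s_0$ from the hypothesis with substantial slack: the quadratic term $s_0(s_0-1)/2$ exceeds $s_0\log_2 h$ by a factor of order $\log h$, dwarfing the right-hand side.

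I do not anticipate a conceptual obstacle; the argument is a textbook union bound expressing the fact that uniformly random large tournaments are overwhelmingly non-transitive. The only thing that requires care is the discrepancy between the logarithm base used in the hypothesis and the base-$2$ logarithm dictated by the $2^{\binom{s}{2}}$ denominator. The factor $4$ in $4\log h+1$ is precisely the slack needed to absorb this conversion for any reasonable choice of log base, so the final numerical verification works out with room to spare.
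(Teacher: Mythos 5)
Your proof is correct and takes essentially the same route as the paper: a union bound over all large subsets and all orderings of the event ``zero backward edges,'' followed by a numerical verification against the hypothesis on $h$. The only difference is cosmetic --- you use the exact count $s!/2^{\binom{s}{2}}$ of transitive tournaments on $s$ labelled vertices (a slightly tighter per-event bound), whereas the paper bounds the same event via Azuma's inequality applied with $\delta=1$ to the martingale of backward-edge indicators.
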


\begin{proof}
Denote $\mu = \frac{1}{2} {\frac{h}{k_{u}} \choose 2} = \frac{h^{2}}{4k_{u}^{2}}(1-\frac{k_{u}}{h})$.
Denote by $p^{1}_{\delta}$ the probability that some of the $\frac{h}{k_{u}}$-element subsets of $V(H)$ induces at most
$(1-\delta) \mu$ backward edges under some ordering of vertices. Let us fix an ordering of vertices $\theta$ and lets enumerate
all the edges of the tournament. Let $X_{i}$ be an indicator random variable that is equal to $1$ if $i^{th}$ edge is backward and
is zero otherwise. If we define: $Z_{m} = \sum_{i=1}^{m} (X_{i} - 0.5)$ then we see that $\{Z_{m}:m=1,2,...\}$ is a martingale.
Besides, we have: $-0.5 \leq Z_{m} \leq 0.5$. Thus, from the Azuma's inequality, we get:
$\mathbb{P}(Z_{n} \leq -nc) \leq e^{-2nc^{2}}$, where: $n = {\frac{h}{k_{u}} \choose 2}$. Therefore we obtain: $\mathbb{P}(B \leq \mu(1-\delta)) \leq e^{-\mu \delta^{2}}$,
where $B = X_{1} + ... + X_{n}$ is a random variable counting the number of backward edges.
Now, if we sum over all $(\frac{h}{k_{u}})!$ possible orderings of vertices of the fixed subset of order $\frac{h}{k_{u}}$ and over
all possible subsets of order $\frac{h}{k_{u}}$, we obtain: 
$p^{1}_{\delta} \leq(\frac{h}{k_{u}})!{h \choose \frac{h}{k_{u}}} e^{-\mu \delta^{2}}$. Evaluating this expression, we obtain:
$p^{1}_{\delta} \leq e^{\frac{h}{k_{u}}\log(h) - \frac{\delta^{2}h^{2}}{4k_{u}^{2}}(1-\frac{k_{u}}{h})}$.
Now, if we take $\delta = 1$ and take $h$ satisfying the assumptions of the lemma, we obtain: $p^{1}_{1} \leq p$. That completes the proof.
\end{proof}

Note that in particular we have proved that for every $h$ satisfying: $\frac{h}{4\log(h) + 1} > k_{u}$ there exists an $h$-vertex tournament
satisfying gadget-property. 
In practice we even do not need to make a random construction to obtain $H$. There are plenty deterministic constructions of tournaments
with pseudo-random properties, in particular with a gadget-property. For example one can use the family of quadratic residue tournaments.
The proof given above is useful though to get a simple upper bound on the order of tournaments that may serve as gadgets.

\subsection{Proof of Theorem~4.1}
\label{app_6}

We give here detailed proof of the correctness of Theorem~4.1.
We start with the lemma that plays an important role in the procedure \textit{Find} used by \textit{DagClustering}.

\begin{lemma}
 \label{forblemma}
 Let $c>0$, let $H$ be a tournament and let $T$ be an $H$-free digraph. 
 Then $V(T)$ contains two disjoint subsets $A,B$ satisfying:
 \begin{itemize}
  \item $|A| \geq c^{h-1}\lfloor \frac{n}{h^{2}} \rfloor$, $|B| \geq c^{h-1}\lfloor \frac{n}{h} \rfloor$ and
  \item either every vertex from $A$ is adjacent to at most $c|B|$ vertices from $B$ or every vertex from $A$
        is adjacent from at most $c|B|$ vertices in $B$.
 \end{itemize}
\end{lemma}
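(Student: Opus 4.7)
\medskip
\noindent\textbf{Proof plan.} The plan is to mimic the \textit{Searcher} subroutine and embed $H$ into $T$ greedily along a fixed balanced partition, deriving $A$ and $B$ from the step at which the embedding cannot be extended (since $T$ is $H$-free, such a step must exist). Concretely, enumerate $V(H)=\{v_{1},\ldots,v_{h}\}$ and partition $V(T)$ arbitrarily into $h$ sets $W_{1},\ldots,W_{h}$ with $|W_{i}|=\lfloor n/h\rfloor$. We attempt to build an embedding $\phi$ of $H$ into $T$ with $\phi(v_{j})\in W_{j}$, maintaining at each step $j$ a collection of residual candidate sets $W'_{j},W'_{j+1},\ldots,W'_{h}$ with $W'_{i}\subseteq W_{i}$.

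At step $j$ I would look for a vertex $w\in W'_{j}$ that is \emph{good}, meaning that for every $i>j$ it has at least $c|W'_{i}|$ out-neighbors in $W'_{i}$ if $(v_{j},v_{i})\in E(H)$, and at least $c|W'_{i}|$ in-neighbors in $W'_{i}$ otherwise. If such a $w$ exists, set $\phi(v_{j})=w$, replace each $W'_{i}$ (for $i>j$) by the corresponding neighborhood of $w$ inside $W'_{i}$, and proceed to step $j+1$. Each such update shrinks every residual set by a factor of at least $c$, so an easy induction gives $|W'_{i}|\geq c^{j-1}\lfloor n/h\rfloor$ throughout step $j$. If the procedure ever reaches $j=h$ and succeeds, we have produced a copy of $H$ inside $T$, contradicting $H$-freeness; hence there is a smallest step $j^{*}\leq h$ at which no good $w$ exists in $W'_{j^{*}}$.

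At that failing step, every $w\in W'_{j^{*}}$ violates the goodness condition for at least one index $i>j^{*}$. There are $h-j^{*}\leq h-1$ such indices, so by the pigeonhole principle some fixed index $i^{*}>j^{*}$ is the witness of failure for at least $|W'_{j^{*}}|/(h-j^{*})$ vertices of $W'_{j^{*}}$, and among these we can further refine (pigeonhole over the two possible failure directions) to a set $A$ on which the failing condition is uniform: either every $x\in A$ has at most $c|W'_{i^{*}}|$ out-neighbors in $W'_{i^{*}}$, or every $x\in A$ has at most $c|W'_{i^{*}}|$ in-neighbors in $W'_{i^{*}}$. Take $B=W'_{i^{*}}$.

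It remains to verify the promised size bounds. Since $j^{*}\leq h$, the inductive bound yields $|B|\geq c^{h-1}\lfloor n/h\rfloor$, and $|A|\geq |W'_{j^{*}}|/(2(h-j^{*}))\geq c^{h-1}\lfloor n/h\rfloor/(2h)$, which after absorbing the factor $2$ (or keeping only the combinatorially needed refinement without the direction split, since both directions already give the same quantitative conclusion) delivers $|A|\geq c^{h-1}\lfloor n/h^{2}\rfloor$. Disjointness of $A$ and $B$ is automatic because $A\subseteq W_{j^{*}}$ and $B\subseteq W_{i^{*}}$ with $j^{*}\neq i^{*}$. The main obstacle is purely bookkeeping: tracking the geometric decay $|W'_{i}|\geq c^{j-1}\lfloor n/h\rfloor$ cleanly against the floor operations and guaranteeing that the pigeonhole losses stay within the stated $c^{h-1}\lfloor n/h^{2}\rfloor$ bound; no deeper probabilistic or extremal input is required.
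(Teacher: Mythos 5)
Your proof is correct and is essentially the paper's own approach: the paper does not write the argument out but defers to the Erd\H{o}s--Hajnal embedding lemma and notes that the \textit{Find}/\textit{Searcher} procedure mimics its proof, which is precisely the greedy partition-and-refine embedding you reconstruct (and your parenthetical is right that the factor-of-two direction split is unnecessary, since for a fixed failing index $i^{*}$ the direction of failure is already determined by whether $(v_{j^{*}},v_{i^{*}})\in E(H)$).
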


That lemma was stated and proved in a little bit different setting (undirected graphs) in \cite{erdos0}.
Since the proof of the directed setting is very similar, we refer the reader to \cite{erdos0} for details
(Lemma 1.5, p.40).  The procedure \textit{Find} mimics the proof of the lemma above.
In particular, whenever \textit{Find} outputs two sets: $X,Y$ we have:
 \begin{itemize}
  \item $|X| \geq c^{h-1}\lfloor \frac{\epsilon n}{h^{2}} \rfloor$, $|B| \geq c^{h-1}\lfloor \frac{\epsilon n}{h} \rfloor$ and
  \item either every vertex from $X$ is adjacent to at most $c|Y|$ vertices of $Y$ or every vertex from $X$
 is adjacent from at most $c|Y|$ vertices of $Y$.
 \end{itemize}

An important conclusion from the lemma is that the absence of the tournament $H$ in $T$
implies the property that random tournaments satisfy with very small probability, namely the existence
of two substantial (linear) sets: $A,B$ with directed density between them close to one or zero.
The intuition is now that with high probability most of the vertices from these two sets came in fact from the same 
domain (note that tournaments induced by domains are very nonrandom). Thus, by getting $A$ and $B$, 
we can with high probability extract very "pure chunk". 
This chunk can be then added to the part of the related domain that has been already extracted.
We will make all these observations much more precise a little bit later.

In the \textit{DagClustering} algorithm we delete from a digraph all edges of the copy of $H$ it the copy was found.
The explanation is as follows: if the copy was found then one of its edges must be a backward edge within some domain under its 
canonical ordering (this easy observation is a consequence of the definition of the gadget tournament, we will see why 
later). We call edges like that \textit{bad edges}. We dont know exactly which edges of the copy are bad and that is why we delete all of them. 
By doing it systematically, we eventually get rid of all bad edges. Doing it we also get rid of
edges that are not necessarily bad. Fortunately, with high probability the number of bad edges
is not very large thus while clearing up the entire digraph from bad edges we get rid of not too 
many other edges. Thus the detection of the copy of $H$ and deletion of its edges from the digraph is a convenient way to 
detect bad edges without doing much harm to the overal structure of the digraph. 
The following is a useful property of gadgets:

\begin{lemma}
\label{gadgetproplemma}
 Let $T_{1}$ be a digraph from the \textit{DagClustering} algorithm. Then if $T_{1}$ contains a copy of $H$, 
 one of the edges of $H$ is a bad edge.
\end{lemma}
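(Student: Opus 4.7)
The plan is to combine pigeonhole on the distribution of $V(H_c)$ across the groundtruth domains with the defining property of a gadget, applied to the canonical ordering of whichever domain happens to capture enough vertices of the copy.

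First I would set up notation. Let $H_c$ be the copy of $H$ found inside $T_1$, with isomorphism $\phi : V(H) \to V(H_c)$. The vertices of $H_c$ are partitioned by membership in the groundtruth domains $\mathcal{G}_1, \ldots, \mathcal{G}_k$ with $k \leq k_u$. Since $|V(H_c)| = |H|$, the pigeonhole principle guarantees a domain $\mathcal{G}_{i^*}$ with $|V(H_c) \cap \mathcal{G}_{i^*}| \geq |H|/k \geq |H|/k_u$.

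Next I would pull this back to $H$. Let $S = \phi^{-1}(V(H_c) \cap \mathcal{G}_{i^*}) \subseteq V(H)$; by construction $|S| \geq |H|/k_u$. The canonical ordering $\theta_{i^*}$ restricted to $V(H_c) \cap \mathcal{G}_{i^*}$ transports through $\phi^{-1}$ to an ordering $\pi$ of $S$. Now the gadget-property of $H$ applies to $S$ with respect to the ordering $\pi$: there exists an edge $(u, v) \in E(H|S)$ that is backward under $\pi$, i.e. $v$ appears before $u$ in $\pi$.

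Finally I would translate this edge back to $T_1$. The image $(\phi(u), \phi(v))$ is an edge of $H_c$ (because $\phi$ is an isomorphism and $H_c$ is a subgraph of $T_1$), both endpoints lie in $\mathcal{G}_{i^*}$, and by the definition of $\pi$ the head $\phi(v)$ precedes the tail $\phi(u)$ in $\theta_{i^*}$. This is exactly the definition of a bad edge, which completes the proof.

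The only subtle step is the bookkeeping around the isomorphism $\phi$: one must be careful that the gadget property is a property of $H$ (not of $H_c$) and that the induced ordering coming from $\theta_{i^*}$ on the $\phi$-image of $S$ corresponds to a legitimate ordering of $S$ itself. Once that correspondence is set up cleanly, the argument is just pigeonhole plus one invocation of the gadget property, so I do not anticipate any genuine technical obstacle.
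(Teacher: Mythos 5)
Your proposal is correct and follows essentially the same route as the paper's own proof: pigeonhole to find at least $|H|/k_{u}$ vertices of the copy in a single groundtruth domain, then the gadget property applied to the ordering induced by that domain's canonical ordering to produce a backward (bad) edge. The paper phrases it as a proof by contradiction and elides the bookkeeping through the isomorphism that you spell out, but the substance is identical.
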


\begin{proof}
Assume by contradiction that the found copy of $H$ does not contain a bad edge. 
By the Pigeonhole principle, at least $\frac{h}{k_{u}}$ vertices of
the found copy of $H$ were taken from the same domain. Call this set $\mathcal{X}$.
Take a canonical ordering of the vertices from $\mathcal{X}$. From the gadget property we know that this ordering induces at least one
backward edges. That contradicts our previous assumption.
\end{proof}

Now we need to quantify the statement that a preference tournament $T$ that is an input of the algorithm
with high probability does not have too many bad edges.
Denote by $n_{1},...,n_{k}$ the sizes of the domains.

\begin{lemma}
 \label{fewbackedges}
 Let $g(n)$ be some positive function. Let $\delta = \max(2, \frac{4\log(g(n))}{\sum_{i=1}^{k}n_{i}^{2}p_{i}(1-\frac{1}{n_{i}})})$.
 Denote $M = (1+ \delta)\sum_{i=1}^{k}\frac{n_{i}^{2}p_{i}}{2}(1-\frac{1}{n_{i}})$. Then the probability $p^{2}$ that a preference 
 tournament $T$ contains more than $M$ bad edges is at most $\frac{1}{g(n)}$.
\end{lemma}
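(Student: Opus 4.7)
The plan is to model the number of bad edges as a sum of independent Bernoulli random variables and then apply the Chernoff bound (Theorem~\ref{chernoff}) that was recorded in the Tools section. Recall that a bad edge is a backward edge inside some domain under its canonical ordering, and that by the model definition the direction of every edge of $T$ is chosen independently, with probability $p_i$ of being backward for an intra-domain pair in $\mathcal{G}_i$. Within $\mathcal{G}_i$ there are $\binom{n_i}{2}=\frac{n_i^2}{2}(1-\frac{1}{n_i})$ unordered pairs, so writing $X$ for the total count of bad edges and $\mu = \mathbb{E}X$, we get
\[
\mu = \sum_{i=1}^{k}\frac{n_{i}^{2}p_{i}}{2}\Bigl(1-\frac{1}{n_{i}}\Bigr),
\]
and $X$ is a sum of independent indicators, one per intra-domain pair, each of success probability at most $p_i$ (with equality for the pairs that are backward under $\theta_i$).

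Next I would apply the upper tail of Theorem~\ref{chernoff} to $X$, which yields $\mathbb{P}(X>(1+\delta)\mu) < \exp(-\frac{\delta^{2}}{\delta+2}\mu)$. The choice of $\delta$ in the statement has been set up precisely so that this bound collapses to $1/g(n)$. First, the floor $\delta\geq 2$ ensures $\delta+2\leq 2\delta$, hence $\frac{\delta^{2}}{\delta+2}\geq \frac{\delta}{2}$, so
\[
\mathbb{P}(X>M) = \mathbb{P}\bigl(X>(1+\delta)\mu\bigr) < e^{-\frac{\delta}{2}\mu}.
\]
Second, observing that $\sum_i n_i^2 p_i(1-\tfrac{1}{n_i}) = 2\mu$, the other branch of the max gives $\delta\geq \frac{2\log g(n)}{\mu}$, i.e.\ $\frac{\delta}{2}\mu\geq \log g(n)$. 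Substituting yields $\mathbb{P}(X>M)\leq 1/g(n)$, which is exactly $p^{2}\leq 1/g(n)$.

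There is essentially no obstacle here; the only step that requires a little care is checking that the two branches of the $\max$ defining $\delta$ interact correctly with the two regimes of Chernoff's bound (the exponent $\frac{\delta^2}{\delta+2}$ has different asymptotics for small and large $\delta$), and the floor $\delta\geq 2$ is exactly what lets one replace $\frac{\delta^2}{\delta+2}$ by the cleaner $\frac{\delta}{2}$ uniformly. I would state these two inequalities explicitly and then chain them, keeping the accounting between $\mu$ and $\sum_i n_i^2 p_i(1-\frac{1}{n_i}) = 2\mu$ transparent so the factor of $2$ in the numerator of the second branch of the max is accounted for cleanly.
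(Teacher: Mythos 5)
Your proposal is correct and follows essentially the same route as the paper: both represent the bad-edge count as a sum of independent indicators with mean $\mu=\sum_{i=1}^{k}\frac{n_i^2 p_i}{2}(1-\frac{1}{n_i})$, apply the upper tail of Theorem~\ref{chernoff}, and check that the stated $\delta$ makes the exponent at least $\log g(n)$. Your version is in fact slightly more explicit than the paper's, which leaves the final verification ($\frac{\delta^2}{\delta+2}\mu\geq\log g(n)$, via $\delta\geq 2$ and the second branch of the max) as an "one can easily notice" step.
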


\begin{proof}
Let $B$ be a random variable that counts the number of bad edges in $T$. Note that $B$ is a sum of $\sum_{i=1}^{k} {n_{i} \choose 2}$ random
variables $X_{i}$, where each $X_{i}$ corresponds to a certain pair of vertices within a particular domain.
Every $X_{i}$ is one with probability $p_{j}$, where: $j$ is the number of the domain that the vertices corresponding to $X_{i}$ were taken from,
and zero otherwise.
Thus we have: $\mu = EB = \sum_{i=1}^{k} {n_{i} \choose 2}p_{i} = \sum_{i=1}^{k} \frac{n_{i}^{2}p_{i}}{2}(1-\frac{1}{n_{i}})$.
Now, for any $\delta>0$ the probability $p^{2}_{\delta}$ that the number of backward edges is more than $(1+\delta)\mu$, is 
(by ~\ref{chernoff}) at most $e^{-\frac{\delta^{2}}{2+\delta}\sum_{i=1}^{k}\frac{n_{i}^{2}p_{i}}{2}(1-\frac{1}{n_{i}})}$.
The expression on the LHS of the last inequality is at most $\frac{1}{g(n)}$ if:
$\delta \frac{1}{1+\frac{2}{\delta}} \sum_{i=1}^{k}\frac{n_{i}^{2}p_{i}}{2}(1-\frac{1}{n_{i}}) \geq \log(g(n))$.
One can easily notice that this inequality is satisfied for our choice of the value of $\delta$ from the statement of the lemma.
That completes the proof.
\end{proof}

Let us remind the definition of \textit{$(1-\epsilon)$-purity}. We say that a set of vertices $X$ is $(1-\epsilon)$-pure if all but at most 
an $\epsilon$-fraction of all the vertices of $X$ are from the same domain.
For two disjoint sets: $X,Y$ we denote by $E(X,Y)$ the number of directed edges going from $X$ to $Y$. Intuitively speaking,  
we expect to have substantial numbers of directed edges going from both: $X$ to $Y$ and $Y$ to $X$ if $X$ and $Y$ contain substantial chunks 
from different domains. Below we make this statement precise and give it in the form that will be very useful later in the 
proof (parameters $p_{u}, p_{m}$ used in the statement of the next lemma were already defined in the section describing preference tournament model): 

\begin{lemma}
 \label{notpurelemma}
Let $\delta_{M}, M$ be as in Lemma~\ref{fewbackedges}.
Let $T$ be a preference tournament with $|T|=n$ and each domain of size at least two.
Assume that $k_{u} \geq 2$.
 Let $\epsilon$ satisfy: $\frac{1}{k_{u}} \geq \epsilon \geq \sqrt{\frac{2(1+\delta_{M})k_{u}h^{5}p_{u}}{p_{m}}}$. Let $h>0$ and $q(n)$ be a positive function. Assume that $0 < c  \leq \frac{1}{4}\epsilon p_{m}$ and 
$n \geq \max(\frac{2h^{2}}{\epsilon}, \frac{32h^{3}\log(2)}{\epsilon^{4}c^{2h-2}p_{m}^{2}},\frac{4h}{\epsilon^{2}c^{h-1}p_{m}}\sqrt{h\log(q(n))})$.
 Denote by $T^{d}$ a tournament obtained from the preference tournament $T$ by deleting some $\frac{h^{2}}{2}M$ edges
 (notice that we do not assume anything about the mechanism according to which those edges were deleted, in particular
  the set of deleted edges might be highly correlated with the overal structure of $T$).
 Let $\mathcal{E}$ be the following event:
 \begin{itemize}
  \item  there exist two sets: $A$ and $B$ in $T^{d}$ such that: $|A| \geq c^{h-1} \lfloor \frac{\epsilon n}{h^{2}} \rfloor$,
           $|B| \geq c^{h-1} \lfloor \frac{\epsilon n}{h} \rfloor$, $A \cup B$ is not $(1-\epsilon)$-pure and either
           every vertex of $A$ is adjacent to at most $c|B|$ vertices in $B$ or every vertex in $A$ is adjacent from at most 
           $c|B|$ vertices in $B$.
 \end{itemize}

 Then the probability $p^{3}$ that $\mathcal{E}$ holds is at most $\frac{1}{q(n)}$.
\end{lemma}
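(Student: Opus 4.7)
My plan is to fix an admissible pair $(A,B)$, show that non-purity forces linearly many cross-domain ordered pairs between $A$ and $B$, apply a Chernoff bound to each of the two edge-direction counts, absorb the $\tfrac{h^{2}}{2}M$ adversarially deleted edges into the Chernoff slack, and finally union-bound over all admissible $(A,B)$.

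\emph{Many cross-domain pairs.} Let $n_{i}(A),n_{i}(B)$ be the number of vertices of $A,B$ lying in the $i$-th groundtruth domain, and set $K=\sum_{i\ne j}n_{i}(A)n_{j}(B)$. Let $i^{*}$ be a domain maximizing $n_{i}(A)+n_{i}(B)$, write $p=n_{i^{*}}(A)/|A|$, $q=n_{i^{*}}(B)/|B|$, $\alpha=|A|/(|A|+|B|)$, and split $A=A'\cup A''$, $B=B'\cup B''$ according to membership in $i^{*}$. All pairs in $A'\times B''$ and in $A''\times B'$ are cross-domain, so
\[
K\;\ge\;|A'||B''|+|A''||B'|\;=\;|A||B|\bigl(p+q-2pq\bigr).
\]
The non-purity hypothesis translates into $\alpha p+(1-\alpha)q\le 1-\epsilon$, while $i^{*}$ being the most populous of the $k\le k_{u}$ domains forces $\alpha p+(1-\alpha)q\ge 1/k\ge \epsilon$ (since $\epsilon\le 1/k_{u}$). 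A short minimization of $p+q-2pq$ over this segment (checking the two corners $p\in\{0,1\}$ and the interior critical point) gives $p+q-2pq\ge\epsilon$, hence $K\ge\epsilon|A||B|$.

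\emph{Chernoff and absorbing the deletion.} By the planted-partition definition, the $K$ cross-domain ordered pairs yield independent Bernoulli edge-direction choices, each direction having probability at least $p_{m}$. Let $Y$ count the cross-domain pairs whose $T$-edge goes from the $A$-endpoint to the $B$-endpoint; then $\mathbb{E}Y\ge Kp_{m}\ge\epsilon p_{m}|A||B|$, and Theorem~\ref{chernoff} with $\delta=\tfrac12$ yields $\mathbb{P}(Y<\mathbb{E}Y/2)\le\exp(-\epsilon p_{m}|A||B|/10)$, with the same bound for the reverse count. Since $T^{d}$ lacks at most $\tfrac{h^{2}}{2}M$ edges of $T$, it suffices to verify
\[
c|A||B|+\tfrac{h^{2}}{2}M\;\le\;\tfrac12\mathbb{E}Y,
\]
because then on the Chernoff event both $E^{+}_{T^{d}}(A,B)$ and $E^{-}_{T^{d}}(A,B)$ strictly exceed $c|A||B|$, ruling out event $\mathcal{E}$ for this pair. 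Using $c\le\tfrac14\epsilon p_{m}$ kills the first summand ($2c|A||B|\le\tfrac12\mathbb{E}Y$), and combining $M\le (1+\delta_{M})n^{2}p_{u}/2$ with the hypothesis $\epsilon^{2}p_{m}\ge 2(1+\delta_{M})k_{u}h^{5}p_{u}$ and the size lower bound $|A||B|\ge c^{2h-2}\epsilon^{2}n^{2}/(4h^{3})$ reduces the second summand to an inequality implied by the lemma's lower bound on $n$.

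\emph{Union bound.} There are at most $n^{2}\cdot\binom{n}{|A|}\binom{n}{|B|}\le n^{2}\cdot 4^{n}$ choices of $(A,B)$ of the admissible sizes, and two directional alternatives per pair. Plugging in the Chernoff bound and the above size lower bound gives a total failure probability at most
\[
2n^{2}\cdot 4^{n}\cdot\exp\!\Bigl(-\tfrac{c^{2h-2}\epsilon^{3}p_{m}n^{2}}{40h^{3}}\Bigr),
\]
and the second and third lower bounds on $n$ in the lemma are calibrated exactly so that the exponent dominates $n\log 4+\log q(n)+O(\log n)$, producing the desired $1/q(n)$ bound.

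\emph{Main obstacle.} The combinatorially delicate step is the inequality in the second paragraph: the adversarial deletion budget of $\tfrac{h^{2}}{2}M$ edges and the Chernoff threshold $c|A||B|$ must \emph{simultaneously} fit inside half of the mean cross-domain edge count, for \emph{both} directions. Verifying this cleanly requires using all three standing hypotheses of the lemma — the upper bound on $c$, the lower bound on $\epsilon$, and the lower bound on $n$ — in a single bookkeeping calculation, and this is where the quantitative content of the lemma really lives.
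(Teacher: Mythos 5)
Your overall architecture matches the paper's: fix the candidate pair of sets, argue that non-purity forces linearly many cross-domain pairs, apply a concentration inequality to each edge-direction count, absorb the $\tfrac{h^{2}}{2}M$ deleted edges as additive slack, and union-bound over the at most $4^{n}$ choices of sets. The one structural difference is how the cross-domain pairs are produced: the paper uses the Pigeonhole principle to extract $X_{1}\subseteq A$ with $|X_{1}|\ge |A|/k_{u}\ge\epsilon|A|$ lying in a single domain $\mathcal{D}$, then a case analysis on whether $B$ is mostly in $\mathcal{D}$, yielding $X_{1},X_{2}$ that share no domain (so \emph{every} pair between them is cross-domain, at the cost of shrinking both sets by a factor $\epsilon$ and replacing the threshold $c|A||B|$ by $\tfrac{c}{\epsilon}|X_{1}||X_{2}|$); you instead keep $A,B$ whole and lower-bound $K=\sum_{i\ne j}n_{i}(A)n_{j}(B)$ by $\epsilon|A||B|$ via the optimization of $p+q-2pq$. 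Your variant is correct (the corner/critical-point check does give $p+q-2pq\ge\epsilon$ on the stated segment) and is in fact slightly sharper, since it avoids the extra $\epsilon^{2}$ loss in the set sizes. The choice of Chernoff versus the paper's Azuma is immaterial.

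The genuine problem is exactly at the step you flag as the main obstacle, and your resolution of it is wrong as stated. The inequality $\tfrac{h^{2}}{2}M\le\tfrac{1}{4}\mathbb{E}Y$ cannot be ``implied by the lemma's lower bound on $n$'': with $M=\Theta((1+\delta_{M})n^{2}p_{u})$ and $\mathbb{E}Y\ge \epsilon p_{m}|A||B|\ge \epsilon p_{m}\cdot c^{2h-2}\epsilon^{2}n^{2}/(4h^{3})$, both sides scale as $n^{2}$ and $n$ cancels entirely. What is actually needed is
\[
(1+\delta_{M})\,p_{u}\;\le\;\frac{\epsilon^{3}p_{m}c^{2h-2}}{4h^{5}},
\]
whereas the hypothesis $\epsilon^{2}p_{m}\ge 2(1+\delta_{M})k_{u}h^{5}p_{u}$ only supplies $(1+\delta_{M})p_{u}\le \epsilon^{2}p_{m}/(2k_{u}h^{5})$; the gap between the two is the factor $c^{2h-2}\epsilon\ge 2/k_{u}$, which is false by an enormous margin since $c\le\tfrac{1}{4}\epsilon p_{m}\le\tfrac{1}{4}$ and $h\ge 2$ make $c^{2h-2}$ exponentially small in $h$. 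So the deletion budget does not fit inside the Chernoff slack under the stated hypotheses, and your bookkeeping does not close. To be fair, the paper's own proof faces the identical issue with the term $M/(|X_{1}||X_{2}|)$ and disposes of it with an unexecuted ``one can check''; but a reviewer reading your writeup would have to reject the claim that the lower bound on $n$ does this work, and you would need either a substantially stronger lower bound on $\epsilon$ (one incorporating the factor $c^{2h-2}$) or a sharper bound on the number of deleted edges actually incident to $A\times B$ to complete the argument.
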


\begin{proof}
Take two sets: $A$ and $B$ satisfying the property in the statement of the lemma.
From the Pigeonhole principle we know that $A$ contains a subset $A^{'}$ of 
size $|A^{'}| \geq \frac{1}{k_{u}} |A| \geq \epsilon |A|$ that belongs entirely to one of the groundtruth domains.
Denote this domain by $\mathcal{D}$. 
If $B$ does not contain at least $(1-\epsilon)|B|$ vertices from $\mathcal{D}$ then it contains at least $\epsilon |B|$ vertices from
other domains. On the other hand, if $B$ contains at least $(1-\epsilon)|B|$ vertices from $\mathcal{D}$ then $A$ contains
at least $\epsilon|A|$ vertices from other domains (since $A \cup B$ is not $(1-\epsilon)$-pure).
In both scenarios we conclude that there exist two sets: $X_{1} \subseteq A$ and $X_{2} \subseteq B$ such that no groundtruth
domain intersects both of them.
From the property of $A$ and $B$ we know that either the number of directed edges going from $X_{1}$ to $X_{2}$ in $T^{d}$
is at most $|X_{1}| \cdot c|B|$ or the number of directed edges going from $X_{2}$ to $X_{1}$ in $T^{d}$ is at most
$|X_{1}| \cdot c|B|$. Thus the number of those edges is at most $\frac{c}{\epsilon}|X_{1}||X_{2}|$.
We can conclude that an event $\mathcal{E}$ is contained in the following event $\mathcal{F}$:
there exists a pair of sets: $X_{1},X_{2}$, such that: $|X_{1}| \geq \epsilon c^{h-1} \lfloor \frac{\epsilon n}{h^{2}} \rfloor$,
$|X_{2}| \geq \epsilon c^{h-1} \lfloor \frac{\epsilon n}{h} \rfloor$ and either the number of directed edges in $T$ going from
$X_{1}$ to $X_{2}$ is at most $\frac{c}{\epsilon}|X_{1}||X_{2}| + M$ or the number of directed edges in $T$ going from
$X_{2}$ to $X_{1}$ is at most $\frac{c}{\epsilon}|X_{1}||X_{2}| + M$. 
We have: $p^{3} \leq \mathbb{P}(\mathcal{F})$.
Let us calculate now the probability of $\mathcal{F}$. 
Lets first fix $X_{1}$ and $X_{2}$ and the direction where most of the directed edges between $X_{1}$ and $X_{2}$ go.
This can be done in at most $2^{n} \cdot 2^{n}$ different ways. 
Assume, without loss of generality that the "preferable direction" is from $X_{1}$ to $X_{2}$.
For a pair $(x_{1},x_{2})$ such that: $x_{1} \in X_{1}$ and $x_{2} \in X_{2}$ denote by $Y_{(x_{1},x_{2})}$ a random
variable that is zero if there exists a directed edge from $x_{1}$ to $x_{2}$ in $T$ and is one otherwise.
Denote: $Y = \sum_{(x_{1},x_{2}) \in X_{1} \times X_{2}} Y_{(x_{1},x_{2})}$.
We know that $Y_{(x_{1},x_{2})}$ is one with probability at least $p_{m}$. Thus $EY \geq |X_{1}||X_{2}|p_{m}$.
On the other hand, from the properties of 
$X_{1}$ and $X_{2}$ we know that: $Y \leq \frac{c}{\epsilon}|X_{1}||X_{2}| + M$. 
Therefore an inequality: $Y \leq \frac{c}{\epsilon}|X_{1}||X_{2}| + M$ implies: 
$Y - EY \leq -|X_{1}||X_{2}|(p_{m} - \frac{c}{\epsilon} - \frac{M}{|X_{1}||X_{2}|})$.
Now, knowing the lower bounds on $|X_{1}|$ and $X_{2}$,using a general Azuma's inequality
(see: ~\ref{azuma}) for specific $X_{1}$ and $X_{2}$ and a union bound over all pairs $(X_{1},X_{2})$, we obtain:
$\mathbb{P}(\mathcal{F}) \leq 2^{n} \cdot 2^{n} 
\exp(-2\epsilon^{2} \lfloor \frac{\epsilon n}{h} \rfloor \lfloor \frac{\epsilon n}{h^{2}} \rfloor c^{2h-2}(p_{m}-\frac{c}{\epsilon}-
\frac{M}{ \lfloor \frac{\epsilon n}{h} \rfloor \lfloor \frac{\epsilon n}{h^{2}} \rfloor})^{2})$.
Thus we have: 
$\mathbb{P}(\mathcal{F}) \leq 2^{n} \cdot 2^{n} 
\exp(-2\epsilon^{2} \frac{(\epsilon n)^{2}}{h^{3}}(1-\frac{h}{\epsilon n})(1-\frac{h^{2}}{\epsilon n}) c^{2h-2}(p_{m}-\frac{c}{\epsilon}-\frac{M}{\frac{(\epsilon n)^{2}}{h^{3}}(1-\frac{h}{\epsilon n})(1-\frac{h^{2}}{\epsilon n})})^{2})$.
One can check that under our choice of parameters from the assumptions of the lemma we have: 
$\mathbb{P}(\mathcal{F}) \leq \frac{1}{q(n)}$. Since $\mathcal{E} \subseteq \mathcal{F}$, we also have
$\mathbb{P}(\mathcal{E}) \leq \frac{1}{q(n)}$ and that completes the proof.
\end{proof}

We need one more observation before proving Theorem~4.1.
In the clustering algorithm when we extract a set of vertices $Z$ we need to decide to which partial
cluster this set should be added (it could be also the case that $Z$ will form a new cluster). If we know that 
all the sets under consideration are pure enough then we can use this fact to make a right choice.
When we consider partial cluster $P_{i}$ we can find the ordering of vertices of $Z \cup P_{i}$ that somehow
approximates an optimal ordering with the minimum number of backward edges (this can be done for example
with the use of the \textit{QuickSort} algorithm). If the number of backward edges under this ordering is big enough then
with high probability we can conclude that $P_{i}$ and $Z$ were taken from different domains and so
$Z$ should not be added to $P_{i}$. Otherwise, with very high probability they come from the same domain
and that is why we should merge $Z$ with $P_{i}$. We make this intuitive statement more formal below: 

\begin{lemma}
\label{whenmergelemma}
Let $T$ be a preference tournament with $n=|T|$ with $\frac{p_{m}}{p_{u}} \geq 12$.
Let $0 < \epsilon \leq \frac{1}{4}p_{m}$ and $\delta_{1} = \frac{p_{m}}{6p_{u}} - 1$.
Let $w(n)$ be a positive function such that $w(n) \leq 2^{n-1}$.
Assume furthermore that $n \geq \max(\frac{2h}{c^{h-1} \epsilon},\frac{288 \log(2) h^{2}}{c^{2h-2}\epsilon^{2}p_{m}})$ and $\frac{n}{\log(n)} \geq \frac{864 h^{2}}{c^{2h-2}\epsilon^{2}p_{m}}$.
Let $p^{4}$ be the probability of the following event $\mathcal{G}$:
\begin{itemize}
 \item there exist two sets $X$, $Y$ in $V(T)$ that are both $(1-\epsilon)$-pure, $|X| \geq c^{h-1} \lfloor \frac{\epsilon n}{h} \rfloor$, $|Y| \geq c^{h-1} \lfloor \frac{\epsilon n}{h} \rfloor$, such that most of the vertices of $X$ are taken from the same domain that most of the vertices of $Y$ and there exists an ordering of $X \cup Y$ with more than 
$(1 + \delta_{1} + 2\epsilon)|X||Y|p_{u}$ backward edges in $T$ with one endpoint in $X$ and the other in $Y$ or
\item there exist two sets $X$, $Y$ in $V(T)$ that are both $(1-\epsilon)$-pure, $|X| \geq c^{h-1} \lfloor \frac{\epsilon n}{h} \rfloor$, $|Y| \geq c^{h-1} \lfloor \frac{\epsilon n}{h} \rfloor$, such that most of the vertices of $X$ are taken from a different domain that most of the vertices of $Y$ and there exists an ordering of $X \cup Y$ with at most 
$3(1 + \delta_{1} + 2\epsilon)|X||Y|p_{u}$ backward edges in $T$ with one endpoint in $X$ and the other in $Y$
\end{itemize}
Then $p^{4} \leq \frac{1}{w(n)}$.
\end{lemma}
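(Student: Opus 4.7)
The plan is a union bound: fix the pair $(X,Y)$ and the ordering under consideration, control the bad event for that fixed data using concentration, and finally sum over the combinatorially many choices. The two bullets are handled separately, since one is an upper‑bound statement about the minimum feedback arc set of the cross edges between $X$ and $Y$, and the other is a lower‑bound statement about the same quantity.

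\textbf{Same-domain case (first bullet).} The goal is to exhibit one specific ordering of $X\cup Y$ with at most $(1+\delta_1+2\epsilon)|X||Y|p_u$ backward cross edges — this contradicts the bad event and, in the context of the algorithm, means that the $QuickSort$ routine from \cite{mohri}, being a constant-factor approximation to the minimum feedback arc set, will output an ordering well below the threshold. I take the canonical ordering $\theta_{\mathcal D}$ of the common domain $\mathcal{D}$ and splice in the $\le\epsilon|X|$ outliers of $X$ and $\le\epsilon|Y|$ outliers of $Y$ at arbitrary positions. The backward $X$–$Y$ edges split into (i) those inside $(X\cap\mathcal D)\times(Y\cap\mathcal D)$, and (ii) those incident to an outlier. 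Part (i) is a sum of independent $\mathrm{Bernoulli}(p_i)$ indicators with total mean at most $|X||Y|p_u$, so Theorem~\ref{chernoff} bounds it by $(1+\delta)|X||Y|p_u$ except with probability $\exp(-\Omega(|X||Y|p_u))$. Part (ii) is bounded deterministically by $2\epsilon|X||Y|$. The hypotheses $p_m/p_u\ge12$ and $\epsilon\le p_m/4$ are precisely what let these two pieces combine into the target $(1+\delta_1+2\epsilon)|X||Y|p_u$ with $\delta_1=p_m/(6p_u)-1$.

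\textbf{Different-domain case (second bullet).} Here the aim is to lower‑bound the minimum feedback number over all orderings of $X\cup Y$, so that even the optimal ordering cannot go below $3(1+\delta_1+2\epsilon)|X||Y|p_u$. Fix any ordering $\sigma$ of $X\cup Y$. For every cross‑domain pair $(u,v)\in(X\cap\mathcal{D}_1)\times(Y\cap\mathcal{D}_2)$, whichever of $u,v$ comes first under $\sigma$, the event ``the edge between $u$ and $v$ is backward under $\sigma$'' has probability $\min(p_{ij},p_{ji})\ge p_m$, and these indicators are independent across pairs. Their expected sum is therefore at least $(1-\epsilon)^2|X||Y|p_m$, which, using $p_m\ge12p_u$, comfortably dominates $3(1+\delta_1+2\epsilon)|X||Y|p_u=(\tfrac{p_m}{2}+6\epsilon p_u)|X||Y|$. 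Theorem~\ref{chernoff} controls the deviation, and a union bound over the at most $(|X|+|Y|)!=e^{O(n\log n)}$ orderings is absorbed by the Chernoff decay $\exp(-\Omega(|X||Y|p_m))$.

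The final step in both cases is a union bound over the $\le 4^n$ candidate pairs $(X,Y)$. The main obstacle, and the reason for the precise lower bounds imposed on $n$, is calibrating the Chernoff exponents against the combinatorial factors $4^n$ (pairs) and $(2n)!$ (orderings, only in Case~2). Since $|X|,|Y|\ge c^{h-1}\lfloor\epsilon n/h\rfloor$, the Chernoff exponents scale like $c^{2h-2}\epsilon^2 n^2 p_m/h^2$, and the hypothesis $n/\log n\ge 864\,h^2/(c^{2h-2}\epsilon^2p_m)$ is exactly what forces the total probability below $1/w(n)$. Once this calibration is in hand, the rest is routine concentration bookkeeping.
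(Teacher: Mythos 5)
Your proposal follows essentially the same route as the paper's proof: in the same-domain case you use the canonical ordering of the common domain with outliers inserted arbitrarily, split the backward cross edges into the deterministic $2\epsilon|X||Y|$ outlier contribution and the Chernoff-controlled within-domain contribution with mean $|X||Y|p_{u}$; in the different-domain case you lower-bound the backward edges of an arbitrary fixed ordering by a sum of independent indicators each succeeding with probability at least $p_{m}$, and close with a union bound over the $2^{2n}$ set pairs and (only in the second case) the $n!$ orderings, calibrated against the assumed lower bounds on $n$. This is the paper's argument in all essentials, so no further comparison is needed.
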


\begin{proof}
Let us first consider two sets $X$ and $Y$ such that most of the vertices of $X$ came from the same domain as
most of the vertices of $Y$. Denote this domain by $\mathcal{D}$. Order the vertices of $(X \cup Y) \cap \mathcal{D}$
according to the canonical ordering of $\mathcal{D}$ and add the remaining vertices of $X \cup Y$ to that ordered sequence in the arbitrary way.
The number $B_{1}$ of backward edges in $T$ induced by that ordering with one endpoint in $X$, one in $Y$ and involving points not from $\mathcal{D}$ is (from $(1-\epsilon)$-purity) at most $2\epsilon|X||Y|$. 
Denote by $B_{2}$ the number of backward edges in $T$ induced by that ordering with one endpoint in $X$, one in $Y$ and involving only points from $\mathcal{D}$. By the similar analysis as in the proofs of the previous lemmas, we conclude (using ~\ref{chernoff}) that $\mathbb{P}(B_{2} > (1+\delta_{1}) \mu) < e^{-\frac{\delta_{1}^{2}}{2+\delta_{1}}\mu}$, where $\mu = |X||Y|p_{u}$. Thus the probability that:
$B_{1} + B_{2} \geq (1+\delta_{1})\mu + \frac{2\epsilon}{p_{u}}\mu$ is at most 
$e^{-\frac{\delta_{1}^{2}}{2+\delta_{1}}\mu}$. If we now sum over all possible subsets $X,Y$
with $|X| \geq c^{h-1} \lfloor \frac{\epsilon n}{h} \rfloor$, $|Y| \geq c^{h-1} \lfloor \frac{\epsilon n}{h} \rfloor$
then we get the following upper bound: $p^{4}_{a} \leq 2^{2n} \exp(-\frac{\delta_{1}^{2}}{2+\delta_{1}}
\frac{2c^{h-1} \lfloor \frac{\epsilon n}{h} \rfloor(2c^{h-1}\lfloor \frac{\epsilon n}{h} \rfloor - 1)}{2}p_{u})$.
Now let us assume that most of the vertices of $X$ are from different domain than most of the vertices of $Y$
(second scenario in the statement of the lemma). Fix some ordering of vertices and sets $X$ and $Y$.
If we denote by $B$ the number of backward edges with one endpoint of $X$ and one in $Y$ under this given ordering,
then, using similar analysis as before, we conclude that the probability that $B \leq 3(1 + \delta_{1} + 2\epsilon)|X||Y|p_{u}$ is at most
$e^{-\frac{\delta_{2}^{2}}{2 + \delta_{2}}\mu^{'}}$, where: $\delta_{2} = \frac{1}{2} - \frac{\epsilon}{p_{m}}$
and $\mu^{'}=|X||Y|p_{m}$. If we now sum over all possible orderings of vertices and all possible choices of $X$ and
$Y$ then we get the following upper bound: $p^{4}_{b} \leq 2^{2n} n! \exp(-\frac{\delta_{1}^{2}}{2+\delta_{1}}
\frac{2c^{h-1} \lfloor \frac{\epsilon n}{h} \rfloor(2c^{h-1}\lfloor \frac{\epsilon n}{h} \rfloor - 1)}{2}p_{m})$.
We have: $p^{4} \leq p^{4}_{a} + p^{4}_{b}$.
Under our assumptions on the values of parameters used in the statement of the lemma one can check that:
$p^{4}_{a} \leq \frac{1}{2w(n)}$ and $p^{4}_{b} \leq \frac{1}{2w(n)}$. (This time we will not show
the calculations in more detail since they do not involve anything more than a tedious algebra.)
That completes the proof.
\end{proof}

We are ready to prove Theorem~4.1.
We will in fact prove more general yet also much more technical result from which
Theorem~4.1 follows.

\begin{theorem}
\label{clusttechtheorem}
Let $q(n), g(n)$ be positive functions.
Assume that $T$ is a preference tournament of $n$ vertices and with parameters: $p_{u}, p_{m}, k_{u}$ and such that $het(T) \geq 12$. Let us assume that every domain of $T$ contains at least two vertices and $T$ consists of $k$ domains.
 Let $H$ be a gadget tournament used by the algorithm.
Denote $h=|H|$.
Let $\epsilon>0$ be a precision parameter. Assume that $2h\sqrt{\frac{(1+\delta_{M})k_{u}h}{het(T)}} \leq \epsilon
\leq \min(\frac{1}{k_{u}},\frac{1}{4}p_{m})$, where: $\delta_{M}=\max(2,\frac{4\log(g(n))}
{\sum_{i=1}^{k}n_{i}^{2}p_{i}(1-\frac{1}{n_{i}})})$. 
Let us assume that $\frac{n}{\log(n)}  \geq  \frac{288 h^{2}}{c^{2h-2}\epsilon^{2}p_{m}}$ and
$\frac{n}{\sqrt{q(n)}} \geq \frac{80h^{3} \log(2)}{\epsilon^{4}c^{2h-2}p_{m}^{2}}$, where:
$c=\frac{1}{4}\epsilon p_{m}$.
Then \textit{DagClustering} algorithm outputs $(1-\epsilon)$-pure partitioning of all but at most an $\epsilon$-fraction
of all the vertices of $V(T)$ with probability $P_{succ} \geq (1 - (\frac{1}{q(n)} + \frac{1}{g(n)} + \frac{1}{2^{n-1}}))(1-O(p_{f}))$, where: $p_{f}$ is a probability that the method proposed in \cite{mohri}
does not output the $3$-approximation of the feedback arc set problem.
\end{theorem}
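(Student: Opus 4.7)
The plan is to combine the probabilistic lemmas established above (Lemmas~\ref{fewbackedges}, \ref{notpurelemma}, \ref{whenmergelemma}) into a single ``good event'' under which the entire execution of \textit{DagClustering} behaves as claimed, and then to bound the fraction of uncovered vertices using the termination condition of the outer while loop.

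First I would fix a global good event $\mathcal{G}^{\ast}$ as the intersection of: (i)~the event of Lemma~\ref{fewbackedges} with function $g(n)$, bounding the total number of bad edges in $T$ by $M$; (ii)~the complement of the event of Lemma~\ref{notpurelemma} with function $q(n)$, which rules out ``impure'' candidate pairs in every subtournament obtained from $T$ by deleting at most $\frac{h^{2}}{2}M$ edges; and (iii)~the complement of the event of Lemma~\ref{whenmergelemma} with $w(n)=2^{n-1}$. The quantitative hypotheses on $n,\epsilon,c,h$ listed in the statement are exactly those required to instantiate these three lemmas, and a union bound yields $\mathbb{P}(\mathcal{G}^{\ast})\geq 1-\bigl(\frac{1}{q(n)}+\frac{1}{g(n)}+\frac{1}{2^{n-1}}\bigr)$.

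Second, assuming $\mathcal{G}^{\ast}$ holds, I would analyze a single iteration of the outer loop. By the directed analogue of Lemma~\ref{forblemma} implemented in \textit{Searcher}, a call to \textit{Find} either returns a gadget copy $H_{c}$ in the current $T_{1}$ or two sets $X,Y$ whose sizes meet the thresholds appearing in Lemma~\ref{notpurelemma}. In the former case Lemma~\ref{gadgetproplemma} forces at least one of the deleted edges of $H_{c}$ to be bad, so over the whole run at most $M$ iterations return a copy and the cumulative number of deleted edges never exceeds ${h \choose 2}M \leq \frac{h^{2}}{2}M$. Hence every invocation of \textit{Find} operates on a subtournament of $T$ inside the regime covered by Lemma~\ref{notpurelemma}, so under $\mathcal{G}^{\ast}$ the extracted set $Z=X\cup Y$ is automatically $(1-\epsilon)$-pure. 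I would then verify by induction on iteration count that every cluster currently in $\mathcal{P}$ is also $(1-\epsilon)$-pure, using the merging test as follows. The threshold $back=(\frac{het(T)}{6}+2\epsilon)|Z||\mathcal{P}_{i}|p_{u}$ equals $(1+\delta_{1}+2\epsilon)|Z||\mathcal{P}_{i}|p_{u}$ with $\delta_{1}=\frac{het(T)}{6}-1$, which matches the two regimes of Lemma~\ref{whenmergelemma} up to the factor of $3$ absorbed by the \textit{QuickSort} guarantee of Mohri et al. Under $\mathcal{G}^{\ast}$, if $Z$ and $\mathcal{P}_{i}$ share a majority domain then the canonical-style ordering witnesses a crossing count at most $back$, so a $3$-approximation to the optimum also meets the threshold and a merge is triggered that preserves purity; if they do not share a majority domain then no ordering of $Z\cup \mathcal{P}_{i}$ attains such a low crossing count, so no merge is triggered and $Z$ becomes a new pure cluster. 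The $(1-O(p_{f}))$ factor in the success probability absorbs the chance that \textit{QuickSort} fails to deliver its promised $3$-approximation in at least one of the $O(n)$ invocations.

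Finally, the while loop halts as soon as $|T_{1}|<\epsilon|T|$, so at most an $\epsilon$-fraction of $V(T)$ is excluded from the output partition $\mathcal{P}$. The main obstacle is the bookkeeping inside the second step: one must show that across the whole run the cumulative edge deletions never overshoot the $\frac{h^{2}}{2}M$ budget despite the adaptive, data-dependent order in which bad edges are uncovered, so that Lemma~\ref{notpurelemma} remains applicable uniformly over every invocation of \textit{Find}. This is the step where the deterministic combinatorial accounting provided by the gadget property must interlock precisely with the purely probabilistic concentration bounds of Lemmas~\ref{fewbackedges}--\ref{whenmergelemma} to yield the stated success probability.
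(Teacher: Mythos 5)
Your proposal follows essentially the same route as the paper: the same three lemmas define the good event, the same union bound gives the $\frac{1}{q(n)}+\frac{1}{g(n)}+\frac{1}{2^{n-1}}$ term, the gadget-property lemma bounds the cumulative edge deletions by ${h\choose 2}M$, and the same induction on cluster purity via the merging threshold closes the argument. One small correction: you attribute the $(1-O(p_{f}))$ factor to ``$O(n)$ invocations'' of \textit{QuickSort}, but $(1-p_{f})^{O(n)}$ only gives $1-O(np_{f})$; the paper instead observes that \textit{Find} outputs a pair $(X,Y)$ only a constant number of times (each output removes a linear-size chunk of $V(T_{1})$), so \textit{QuickSort} is invoked only a constant number of times and the $(1-O(p_{f}))$ bound follows.
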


\begin{proof}
Note that obviously during the entire execution of the algorithm every time we perform an operation on the
tournament $T_{1}$ we have: $|T_{1}| \geq \epsilon n$.

Let $M$ be as in Lemma~\ref{fewbackedges}. Let $\mathcal{A}$ be the following event:
tournament $T$ has no more than $M$ bad edges. Let $\mathcal{B}$ be the following event:
there do not exist two sets: $A$ and $B$ in $T_{1}$ during the entire execution of the algorithm such that: 
\begin{itemize}
\item $|A| \geq c^{h-1} \lfloor \frac{\epsilon n}{h^{2}} \rfloor$,
\item $|B| \geq c^{h-1} \lfloor \frac{\epsilon n}{h} \rfloor$, 
\item $A \cup B$ is not $(1-\epsilon)$-pure and 
\item either every vertex of $A$ is adjacent to at most $c|B|$ vertices in $B$ or every vertex in $A$ is adjacent from at    
        most $c|B|$ vertices in $B$.
\end{itemize}
Let $\mathcal{C}$ be the following event:
\begin{itemize}
 \item there do not exist two sets $X$, $Y$ in $V(T)$ that are both $(1-\epsilon)$-pure, $|X| \geq c^{h-1} \lfloor \frac{\epsilon n}{h} \rfloor$, $|Y| \geq c^{h-1} \lfloor \frac{\epsilon n}{h} \rfloor$, such that most of the vertices of $X$ are taken from the same domain that most of the vertices of $Y$ and there exists an ordering of $X \cup Y$ with more than 
$(1 + \delta_{1} + 2\epsilon)|X||Y|p_{u}$ backward edges in $T$ with one endpoint in $X$ and the other in $Y$ and
\item there do not exist two sets $X$, $Y$ in $V(T)$ that are both $(1-\epsilon)$-pure, $|X| \geq c^{h-1} \lfloor \frac{\epsilon n}{h} \rfloor$, $|Y| \geq c^{h-1} \lfloor \frac{\epsilon n}{h} \rfloor$, such that most of the vertices of $X$ are taken from a different domain that most of the vertices of $Y$ and there exists an ordering of $X \cup Y$ with at most 
$3(1 + \delta_{1} + 2\epsilon)|X||Y|p_{u}$ backward edges in $T$ with one endpoint in $X$ and the other in $Y$
\end{itemize}

Notice that under our choice of parameters, using lemmas:~\ref{fewbackedges}, ~\ref{notpurelemma} and 
~\ref{whenmergelemma}, we can conclude that $$\mathbb{P}(\mathcal{A}^{c} \cup \mathcal{B}^{c} \cup \mathcal{C}^{c}) 
\leq \frac{1}{q(n)} + \frac{1}{g(n)} + \frac{1}{2^{n-1}},$$ where $X^{c}$ stands for a complement of an event $X$. 
Let $\mathcal{J}$ be an event that  all of: $\mathcal{A}$, $\mathcal{B}$, $\mathcal{C}$ hold.
Then we have: $\mathbb{P}(\mathcal{J}) \geq 1 - (\frac{1}{q(n)} + \frac{1}{g(n)} + \frac{1}{2^{n-1}})$.
Now assume that $J$ holds. By Lemma~\ref{gadgetproplemma} we know that every time the subprocedure 
\textit{Find} detects a copy of $H$, one of its edges is a bad edge.  Since the total number of bad edges in $T$ is $M$
and every time a bad edge is detected a set of ${h \choose 2}$ edges of $T$ (containing this edge) is being removed,
we conclude that the algorithm removes at most ${h \choose 2} M$ edges of $T$.
We can also conclude that \textit{Find} returns a copy of $H$ at most $M$ times. 
Let us assume now that \textit{Find} returns two sets: $X, Y$. By Lemma~\ref{forblemma} and the fact that 
$\mathcal{J} \subseteq \mathcal{B}$ we conclude that $X \cup Y$ is $(1-\epsilon)$-pure.
If we now assume inductively that all $\mathcal{P}_{i}$s from the algorithm are $(1-\epsilon)$-pure,
and the procedure proposed in \cite{mohri} gives a $3$-approximation of the  feedback arc set problem,
then using the the inclusion:  $\mathcal{J} \subseteq \mathcal{C}$, we conclude that the partitioning
is $(1-\epsilon)$-pure during the entire execution of the algorithm.  The algorithm obviously terminates
since whenever \textit{Find} does not detect a copy of $H$ at least one vertex of $T$ is being deleted
(and as we said earlier, a copy of $H$ is found at most $M$ times).
Finally notice that the number of runs of \textit{Find} when two sets are being
output is constant (since the sets that are found by \textit{Find} are of linear size in $n$ and every time they are found they are deleted from $T_{1}$). The procedure of
\cite{mohri} is run only when \textit{Find} outputs two sets thus, according to what we have just said, this procedure
is run constant number of times. This observation and the remark that the success of the
procedure is independent of the input it acts on completes the proof.
\end{proof}

Theorem~4.1 follows now immediately from Theorem~\ref{clusttechtheorem} if we notice that
under the assumptions from the statement of Theorem~4.1, we have: $\delta_{M}=2$.

\subsection{Proof of Theorem~4.2}
\label{app_7}

We are ready to prove Theorem~4.2.

\begin{proof}
We have already proved Theorem~4.1 and as we will see now, this is main ingredient of the
proof of Theorem~4.2. Notice first that it suffices to show that procedure $\textit{Purify}$
outputs the set of all nonoutliers. Indeed, assume this is the case. Out of $N$ coming queries at least
$\frac{M}{M+n}$ (on average) will have both vertices from the same domain. At most an $\epsilon$-fraction of the set of that queries (on average) will have its first vertex in the set that was not partitioned by the \textit{DagClustering} algorithm and this will be also true for the second vertex. Finally, by the similar analysis, at least $2\epsilon$-fraction of the queries
with both vertices in the same domain and both partitioned by the algorithm will have at least one of its vertex in the set of
outliers of this domain. If we now take those queries for which this is not the case then it suffices to notice that 
the queries that do not correspond to backward edges in the ordering obtained by the algorithm and do not correspond
to backward edges in the canonical ordering of domains are answered correctly. Since the QuickSort algorithm produces a 3-approximation of the feedback arc set problem, we are done.\\

All we need to do is to prove the correctness of the \textit{Purify} procedure. 
Fix a part $\mathcal{P}_{i}$ of the partitioning and let $v \in \mathcal{P}_{i}$ be a vertex that is not an outlier.  Let $\Delta^{v}$ be the number of directed
triangles in $T^{c} | \mathcal{P}_{i}$ that are touching it. Let $\Delta^{v}_{1}$ be those of these triangles that
have at least one vertex in the set of outliers. Since $\mathcal{P}_{i}$ is $(1-\epsilon)$-pure, we trivially get:
$|\Delta^{v}_{1}| \leq \epsilon|\mathcal{P}_{i}|^{2}$. Let $\Delta^{v}_{2}$ be those triangles from $\Delta^{v}$
that have all three vertices in the set of nonoutliers. But every such triangle needs to have an edge that is backward under canonical ordering of the nonoutliers. From this we get: $\Delta^{v}_{2} \leq B^{\mathcal{P}_{i}} +
B^{\mathcal{P}_{i},v} \cdot \mathcal{P}_{i}$, where $B^{\mathcal{P}_{i}}$ is the set of backward edges under
canonical orderings of nonoutliers and $B^{\mathcal{P}_{i},v}$ is the set of backward edges under canonical orderings of nonoutliers with one endpoint in $v$.
Thus we get: $\Delta^{v} \leq \Delta^{v}_{1}  +\Delta^{v}_{2} \leq \epsilon|\mathcal{P}_{i}|^{2} + B^{\mathcal{P}_{i}} +
B^{\mathcal{P}_{i},v} \cdot \mathcal{P}_{i}$.
Now let $v$ be from the set of outliers. Let $\mathcal{R}$ denote the set of nonoutliers and let
$\mathcal{R}_{1}$ be its subset of first $\frac{|\mathcal{R}|}{2}$ vertices under canonical ordering and let
$\mathcal{R}_{2}$ be its subset of last $\frac{|\mathcal{R}|}{2}$ vertices under canonical ordering.
Notice that $\mathcal{R}_{1},\mathcal{R}_{2} \geq \frac{1-\epsilon}{2}|\mathcal{P}_{i}|$.
It is also easy to see that the number of directed triangles touching $v$ is at least:
$\Delta^{v} \geq |\mathcal{R}_{1}||\mathcal{R}_{1}| - B^{\mathcal{P}_{i}}$.
Thus, if both: $|\mathcal{R}_{1}| \geq \frac{(1-\epsilon)}{4}|\mathcal{P}_{i}|p_{m}$ and
$|\mathcal{R}_{2}| \geq \frac{(1-\epsilon)}{4}|\mathcal{P}_{i}|p_{m}$, we have:
$\Delta^{v} \geq \frac{(1-\epsilon)^{2}}{16}|\mathcal{P}_{i}|^{2}p_{m}^{2} - B^{\mathcal{P}_{i}}$.
Now it suffices to use Chernoff's inequality and the union bound, as we have done so far many times, to see that for the choice of $\epsilon$ from the statement of the theorem with probability $1-o(1)$ we have both:  
\begin{itemize}
\item $\Delta^{v} \geq \frac{(1-\epsilon)^{2}}{32}|\mathcal{P}_{i}|^{2}p_{m}^{2}$ for every outlier $v$, and
\item $\Delta^{v} < \frac{(1-\epsilon)^{2}}{32}|\mathcal{P}_{i}|^{2}p_{m}^{2}$ for every nonoutlier
\end{itemize}
as long as $|\mathcal{P}_{i}|$ is large enough.
We leave details to the reader this time.
Since $|\mathcal{P}_{i}|$ is linear in $n$ and to approximate $\Delta^{v}$ good enough with probability
$1-o(1)$ it trivially suffices to select $\Theta(\log(n))$ random samples, we are done.
\end{proof}

\subsection{Time complexity of the algorithms}
\label{app_8}

Let us analize time complexity of the \textit{DagClustering} algorithm first.
One run of the algorithm presented in \cite{mohri} requires $O(n\log(n))$ time on average 
(and this running time is highly concentrated
around its mean) but theoretically to be sure with probability $1-o(1)$ that the ranking that is found is a $3$-approximation
we need to perform it more than once. It suffices to perform it $\log(n)$ times (in practice 
it is not necessary to run it more than few times and this is what we did in our experiments).
We then output the ordering that gives the smallest number of backward edges.  This check will require
$O(n^{2})$ time. The mulitple run of the routine from \cite{mohri} is what we call QuickSort subroutine
in the algorithmic section of the main body of the paper. We have already noticed that the subroutine QuickSort
is called constant number of times in the \textit{DagClustering} algorithm.
We have already observed that \textit{Find} outputs two sets: $X$ and $Y$ constant number of times.
Assume that event $\mathcal{J}$ from the proof of Theorem~\ref{clusttechtheorem} holds.
We have also observed that \textit{Find} detects a copy of $H$ at most $M$ times (see:
the proof of Theorem~\ref{clusttechtheorem} for the definition of $M$).
Now, notice that a straightforward implementation of \textit{Find} requires $O(n^{2})$ time.
So conditioned on $\mathcal{J}$ the running time is $O(n^{2}M)$ with high probability.
$M$ is usually much smaller than $n$ thus the running time is slightly superquadratic. In practice it is
even close to linear due to several small heurstics that were used to speedup the entire algorithm (see: discussion
in the experimental section).
To see that the running time of the \textit{HeteroRanking} algorithm is also slightly superquadratic it suffices
to observe that the straightforward implementation of the \textit{Purify} procedure takes $O(n^{2})$ time.

\subsection{\textit{HeteroRanking} versus state-of-the-art ranking methods}
\label{app_9}

In this short subsection we would like to explain a little bit more quantitatively why in the heteregenous setting the algorithms such as \textit{QuickSort} and other methods that aim to find an ordering with small number of backward edges cannot succeed alone and need to act as an input that was previously preprocessed by some digraph clustering algorithm.
Let us focus on the \textit{QuickSort} algorithm first since it is very easy to implement.  
Let us take the very simple yet difficult enough for the \textit{QuickSort} algorithm setting of two
domains: $\mathcal{D}_{1}$ and $\mathcal{D}_{2}$. Assume that both are of the same size $\frac{n}{2}$.
Assume that for every $u \in \mathcal{D}_{1}$, $v \in \mathcal{D}_{2}$ there exists an edge $(u,v)$ in the preference
tournament $T$ with probability at least $p_{m}$ and there exists an edge $(v,u)$ in the preference
tournament $T$ with probability at least $p_{m}$. The \textit{QuickSort} chooses uniformly at random a 
pivot point $p$. By symmetry, assume without loss of generality that $p \in \mathcal{D}_{2}$. 
Let $\mathcal{D}_{1}^{1}$ be the set of first $\frac{|\mathcal{D}_{1}}{2}|$ vertices of $\mathcal{D}_{1}$ under its
canonical ordering and let $\mathcal{D}_{1}^{2}$ be the set of last $\frac{|\mathcal{D}_{1}}{2}|$ vertices of $\mathcal{D}_{1}$ under its canonical ordering. Let $N_{1}$ be the set of outneighbors of $p$ in $\mathcal{D}_{1}^{1}$
and let $N_{2}$ be the set of inneighbors of $p$ in $\mathcal{D}_{1}^{2}$.
Notice that under the first reordering of vertices in the \textit{QuickSort} algorithm all the points from $N_{2}$
will be ordered before all the points from $N_{1}$. Since every point from $N_{1}$ is adjacent to every point
from $N_{2}$, after first reordering of vertices we will produce at least $|N_{1}||N_{2}|$ backward edges.
Then obviously the number of backward edges will be at least $|N_{1}||N_{2}|$ (in fact it is easy to prove that
it will increase even more but lets take the simple bound we obtained from the first iteration).
Notice that the expected size of $N_{1}$ is $\frac{n}{2}p_{m}$ and this is also true for $N_{2}$.
Thus the average number of backward edges in the ranking output by the \textit{QuickSort} algorithm is at least
$\frac{n^{2}}{4}p_{m}$. What is even more important, the backward edges we were talking about so far had the property
that both their endpoints were taken from the same domain. Therefore it is easy to see that the obtained ranking is of very bad quality and will incorrectly answer a significant fraction of all coming queries. In particular, there is no chance
to obtain recall close to one. However, as we have already showed, a purely combinatorial digraph clustering mechanism combined with the \textit{Quicksort} algorithm enables to achieve it. The problem we raised above is not related only to the \textit{QuickSort} method. One can easily prove that for a tournament with $k$ domains of size $n$ each, where the directions of edges between different domains are chosen independently at random, the number of backward edges under every ordering is quadratic with probability close to $1$.  
Presented digraph clustering mechanism is crucial for filtering out low-quality information and detecting regions of much lower entropy that correspond to much denser regions of the underlying majority-voting model probability distribution $\mathcal{P}_{\mathcal{U}}$ . 

\subsection{Weighted setting}
\label{app_10}

Notice that in this setting we consider not just digraphs but tournaments. Besides we assume that preference tournaments are unweighted.  This however does not narrow
the generality of our analysis at all. All the results we obtained transform naturally to the weighted digraph setting.  However considering random model of the preference tournament
in fact enables us to accurately mimic digraph weighted setting, even without making any transformation. The lack of some edges in the general digraph setting was introduced
to emulate the scenario, where there are no statistics regarding some pair of objects or those statistics are very poor. This is straightforwardly simulated in our model by
edges between points from different domains. Both possible directions of those edges have significant probabilities of being chosen. In particular, if both are equal to
$\frac{1}{2}$ then the expected "signed weight" of the corresponding pair of points is $0$ which means than an edge is absent. In the general digraph model the weights
were introduced to emulate the fact that some statistics are more important or the users are more confident about preferences between some objects than others. All the
weights were takne from the interval [0,1]. But of course weights from that interval became probabilities in our model. Thus we do not lose anything by considering unweighted preference tournaments.

\subsection{Experiments}
\label{app_11}

We conducted several experiments to test the ranking mechanism of the \textit{HeteroRanking} 
algorithm as well
as the quality of the clustering produced by the \textit{DagClustering} procedure. 
We also compared our results with those
obtained by the state-of-the-art techniques. 

\begin{table*}[htp]
\centering
\caption{Table comparing the best ranking of the four constructed by: \cite{mohri}, \cite{balcan}, 
\cite{airola} and \cite{klautau} with the \textit{HeteroRanking} algorithm.
Tests were conducted for $C=15$, $depth=12$ and $V=100$ votes for every pair of objects within a groundtruth cluster.
Number of objects is given in $10^{3}$ units and \textit{ratio} in $10^{-2}$ units.} 
\label{table1}
\begin{tabular}{|c|c|c|c|c|c|c|c|c|c|c|c|c|c|c|c|c|c}
\hline
$n$ [in $10^{3}$] & 1  & 2 & 2.5 & 3 & 3.5 & 4 & 4.5 & 5 & 5.5 & 6 & 6.5     \\ \hline
$k$  & 2  & 2 & 2 & 2 & 3 & 3 & 3 & 3  & 4 & 4 & 4   \\ \hline
$ratio$ [in $10^{-2}$]  & 2  & 4 & 6 & 8 & 12 & 14 & 16 & 18  & 19 & 20 & 22   \\ \hline
$p_{succ}$  & 0.55  & 0.55 & 0.55 & 0.55 & 0.55 & 0.55 & 0.6 & 0.6  & 0.6 & 0.6 & 0.6    \\ \hline
$\epsilon_{best of four}$  & 0.33  & 0.27  & 0.28  & 0.30 & 0.24 & 0.33 & 0.35 & 0.32  & 0.34 & 0.35 & 0.34   \\ \hline
$\epsilon_{clust}$  & 0.09  & 0.12  & 0.16  & 0.14  & 0.18 & 0.17 & 0.13 & 0.15  & 0.17 & 0.2 & 0.22     \\ \hline
\end{tabular}
\end{table*}

Table 1 compares the quality of the ranking produced by the $\textit{HeteroRanking}$ algorithm with the best one from the following four: \cite{mohri}, \cite{balcan}, 
\cite{airola} and\\ \cite{klautau} .  
The results cover: different number of domains and quality characteristics of the statistics published according to the
majority-voting mechanism. We use the following notation: $n$ - number of all the objects,
$k$-number of groundtruth clusters, $ratio$ - the ratio between the number of votes for pairs of objects from different clusters
and from the same cluster, $p_{succ}$ - the probability that a voter will correctly classify a given pair of objects, 
$\epsilon_{best of four}$ - generalization error of the best of four state-of-the-art approaches, $\epsilon_{clust}$ - generalization error
of the \textit{HeteroRanking} algorithm. 

\begin{table*}[htp]
\centering
\caption{Table comparing the best ranking of the four constructed by: \cite{mohri}, \cite{balcan}, 
\cite{airola} and \cite{klautau} with the \textit{HeteroRanking} algorithm.
This time we also change the \textit{depth} parameter. Tests were conducted for $C=15$, $V=100$ and $n=6500$.}
\label{table2}
\begin{tabular}{|c|c|c|c|c|c|c|c|c|c|c|c|c|c|c|c|c|c}
\hline
$k$  & 2  & 2 & 2 & 2  & 3 & 3 & 3 & 3 & 4 & 4 & 4   \\ \hline
$ratio$ [in $10^{-2}$]  & 2  & 4 & 8 & 10  & 12 & 14 & 16 & 20  & 22 & 23 & 25  \\ \hline
$p_{succ}$  & 0.55  & 0.55 & 0.55 & 0.55  & 0.55 & 0.55 & 0.55 & 0.6  & 0.6 & 0.6 & 0.6   \\ \hline
$depth$  & 8  & 9  & 10  & 11  & 12 & 20 & 25 & 30 & 40 & 50 & 55    \\ \hline
$\epsilon_{best of four}$  & 0.35  & 0.33  & 0.35  & 0.32   & 0.31 & 0.34 & 0.34 & 0.28  & 0.24 & 0.23 & 0.23   \\ \hline
$\epsilon_{clust}$  & 0.35  & 0.36  & 0.30  & 0.20 & 0.14 & 0.12 & 0.12 & 0.16 & 0.18 & 0.20 & 0.22   \\ \hline
\end{tabular}
\end{table*}

\vspace{-0.05in}
\begin{table*}[b!]
\centering
\caption{Table comparing the best ranking of the four constructed by: \cite{mohri}, \cite{balcan}, 
\cite{airola} and \cite{klautau} with the \textit{HeteroRanking} algorithm.
Tests were conducted for $C=15$, $k=4$ and $V=100$. This time vertices are not uniformly splitted across the clusters and the sizes of the clusters are given as parameters:
$n_{1}$, $n_{2}$, $n_{3}$ and $n_{4}$.}
\label{table3}
\begin{tabular}{|c|c|c|c|c|c|c|c|c|c|c|c|c|c|c|c|c|c}
\hline
$n_{1}$ [in $10^{3}$]  & 0.5  & 0.5 & 0.5 & 0.5  & 0.6 & 0.6 & 0.6 & 0.6  & 0.6 & 0.7 & 0.7   \\ \hline
$n_{2}$ [in $10^{3}$] & 1  & 1 & 1 & 1  & 1 & 1 & 2 & 2  & 2 & 2 & 2   \\ \hline
$n_{3}$ [in $10^{3}$] & 2  & 2 & 2 & 2  & 2 & 2 & 2.5 & 2.5  & 2.5 & 3 & 3   \\ \hline
$n_{4}$ [in $10^{3}$]  & 1.5  & 1.5 & 1.7 & 1.7  & 1.8 & 1.8 & 1.8 & 1.9  & 1.9 & 2 & 2   \\ \hline
$ratio$ [in $10^{-2}$]  & 2  & 4 & 8 & 10  & 12 & 14 & 16 & 20  & 22 & 23 & 25   \\ \hline
$p_{succ}$  & 0.55  & 0.55  & 0.55  & 0.55   & 0.55 & 0.55 & 0.55 & 0.6  & 0.6 & 0.6 & 0.6   \\ \hline
$depth$  & 8  & 9  & 10  & 11   & 12 & 20 & 25 & 30  & 40 & 50 & 55   \\ \hline
$\epsilon_{best of four}$  & 0.35  & 0.31  & 0.29  & 0.28 & 0.19 & 0.18 & 0.17 & 0.19 & 0.23 & 0.22 & 0.19   \\ \hline
$\epsilon_{clust}$  & 0.37  & 0.35  & 0.20  & 0.1 & 0.12 & 0.13 & 0.13 & 0.20 & 0.25 & 0.20 & 0.21 \\ \hline
\end{tabular}
\end{table*}

If not explicitly stated otherwise then the vertices are uniformly splitted between clusters.
It was tested experimentally that adding a simple heuristic
to the \textit{Searcher} subprocedure of \textit{Find} algorithm can significantly
improve the running time without affecting accuracy. 
In the \textit{Searcher} we first randomly permute the 
vertices of the forbidden pattern establishing a random order in which we will look for them. To estimate whether the set of out/inneighbors of the given vertex $h_{i}$ is large enough we perform simple
sampling. 
Then, if we have already found $C$ copies of $H$ ($C$ is a parameter), and in the current run of \textit{Find}
we have found more than $d$ vertices of the potential embedding (we will call $d$ the \textit{depth} parameter)
we rerun \textit{Find}. As a gadget $H$ we use a random tournament of $60$ vertices since it was experimentally verified that this order of the tournament is good enough to obtain high-quality ranking.  

Table 2 compares our results with  the same methods as Table 1 from the main body of the paper, but for different values of  parameter \textit{depth}.
Additional results showing comparison of our approach with existing methods are presented in Table 3. The setting is similar to this for Table 1 and 2 
but this time domains are of different sizes.


Figure 1 and Figure 2 show how the generalization error depends on the quality characteristic of the statistics obtained from the
majority-voting mechanism. Algorithm \textit{HeteroRanking} outperforms state-of-the-art methods for statistics of
lower quality (smaller \textit{ratio} values).

\begin{figure}[h!]
  \centering
\includegraphics[width=0.46\textwidth,natwidth=610,natheight=642]{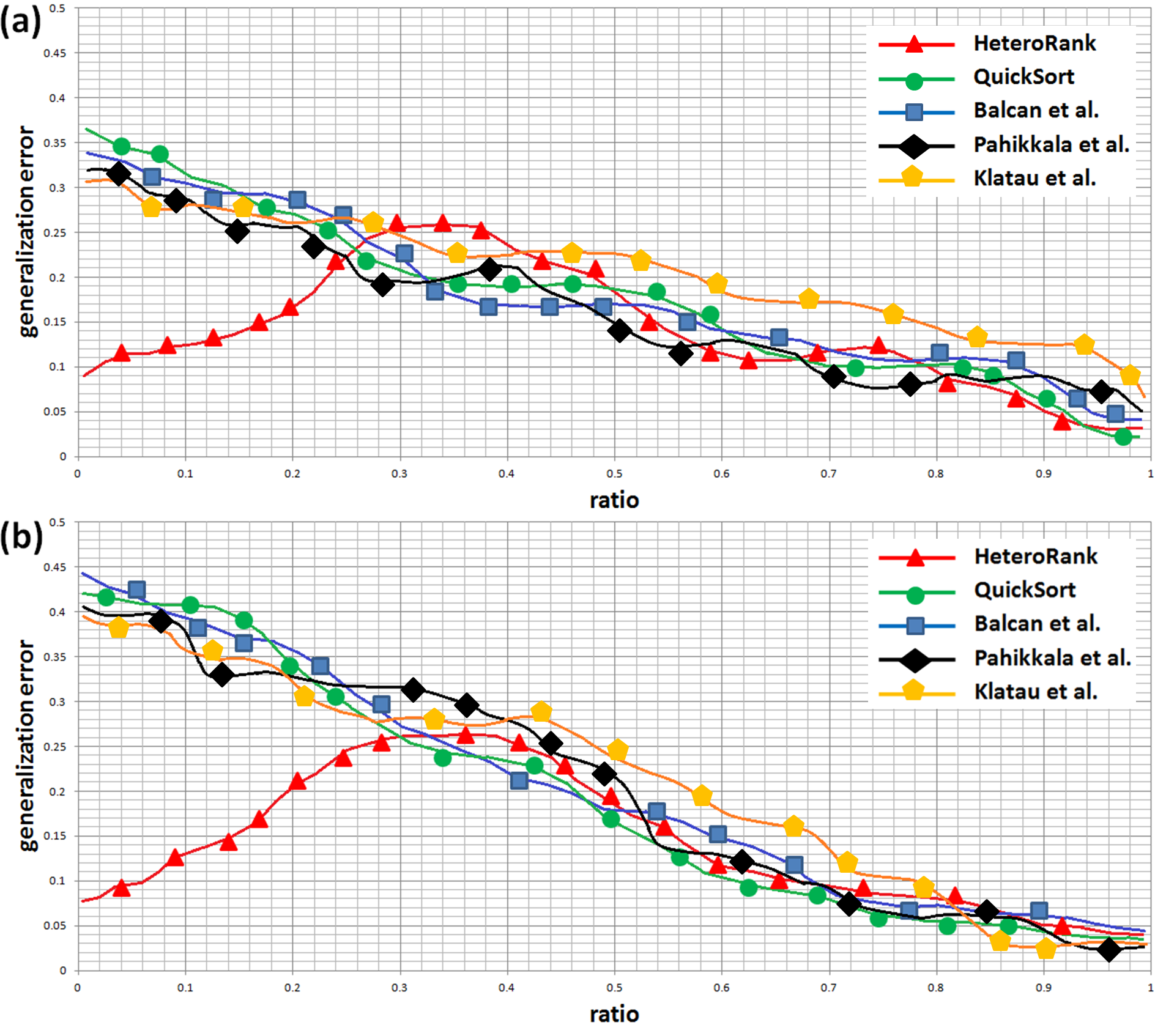}
  \caption{Diagrams comparing \textit{HeteroRanking} method with  
\cite{airola}, \cite{klautau}, \cite{balcan} and the \textit{QuickSort} algorithm from \cite{mohri}. Tests were performed for $C=15$, $n=7000$, $V=100$ and $depth=15$. The number of clusters is: (a) $k=3$, (b) $k=4$.}
\end{figure}

\begin{figure}[h!]
  \centering
\includegraphics[width=0.46\textwidth,natwidth=610,natheight=642]{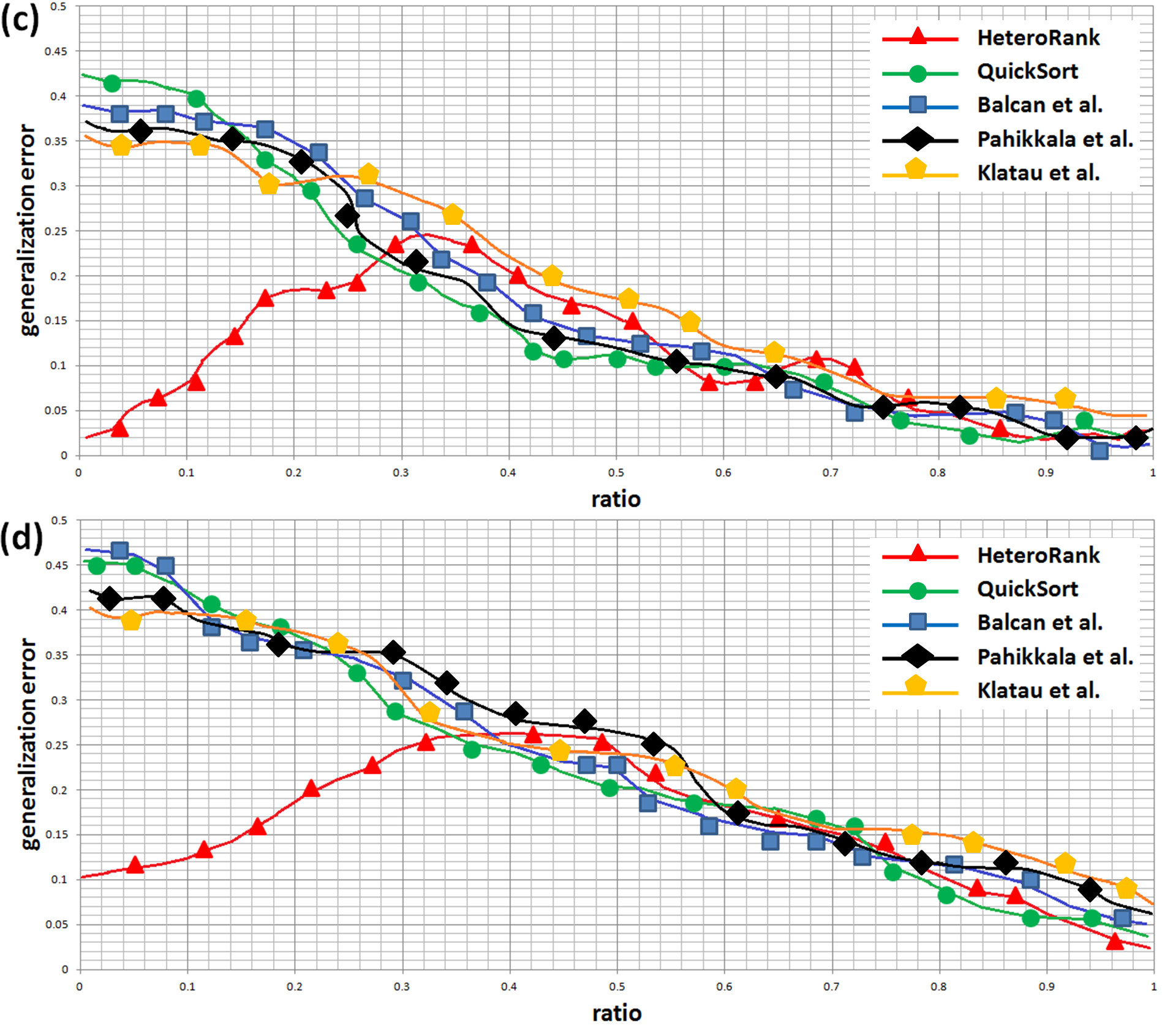}
  \caption{Diagrams comparing \textit{HeteroRanking} method with  
\cite{airola}, \cite{klautau}, \cite{balcan} and the \textit{QuickSort} algorithm from \cite{mohri}.  Tests were performed for $C=15$, $n=7000$, $V=100$ and $depth=15$. The number of clusters is: (c) $k=5$, (d) $k=6$.}
\end{figure}

We also performed experiments testing how many times in practice we need to run \textit{Find} procedure.
It turns out that the theoretical bounds we gave were very pesimistic and in fact the number of iterations is much smaller.
This implies much better running time. 
We checked experimentally that much smaller than assumed number of iterations
comes from the fact that in practice the sets $X,Y$ in the \textit{Find} procedure are detected much earlier and there
are also much larger (the results of the experiments are presented on Figure 3). 
Thus when the piece of the domain is being found in the \textit{HeteroRanking} algorithm, it is very large
on average. That in turn implies much faster reconstruction of the domain (up to the precision parameter $\epsilon$.)
We plan to investigate this phenomenon more closely from the theoretical point of view in the subsequent papers regarding the topic. 
We should also notice that, as was verified by us experimentally, the $\textit{Purify}$ subroutine does not necessarily 
need to be used to obtain good-quality ranking. Since the partitioning computed at earlier stages of the algorithm is very pure, the outliers are chosen as pivot points with very low probability and do not affect the overall quality of the ranking mechanism.

\begin{figure}[h!]
\label{fig:iterations}
  \centering
\includegraphics[width=0.6\textwidth,natwidth=610,natheight=642]{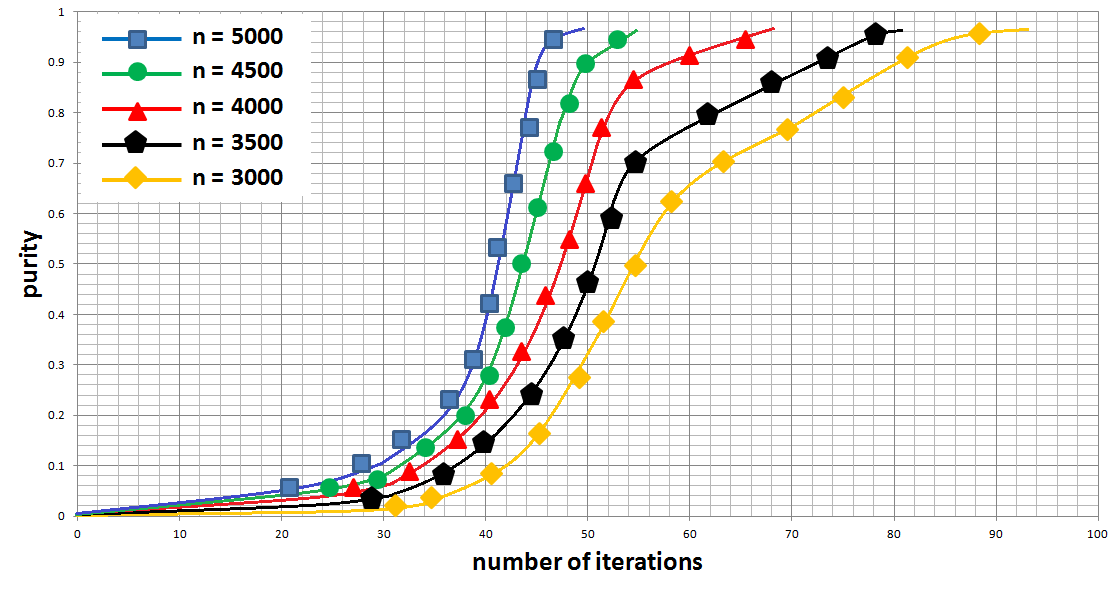}
  \caption{Diagram presenting how the purity of the computed clusters depends on the number of runs of the
\textit{Find} procedure which is the most expensive part of the \textit{HeteroRanking} algorithm.
The purity is defined as the fraction of the groundtruth domain that was already reconstructed. The tests were performed
for different sizes of the set of objects: $n=3000, 3500, 4000, 4500, 5000$, for $C=12$, $depth=14$,
$ratio=0.1$ and $V=200$.}
\end{figure}



\end{document}